\title{Improving the Survivability of Clustered Interdependent Networks by Restructuring Dependencies}
\author{Genya Ishigaki, \IEEEmembership{Student Member, IEEE,} Riti Gour, \IEEEmembership{Student Member, IEEE,} and~Jason~P.~Jue,~\IEEEmembership{Senior~Member,~IEEE}
\thanks{Manuscript submitted \today.} 
\thanks{Genya Ishigaki, Riti Gour, and Jason P. Jue are with the Department of Computer Science at The University of Texas at Dallas, Richardson Texas 75080, USA (Email: \{gishigaki, rgour, jjue\}@utdallas.edu).

An earlier version of this paper has been presented at IEEE International Conference on Communications (ICC) 2018.
}
}
\theoremstyle{definition}
\newtheorem{lemma}{Lemma}
\newtheorem*{problem*}{Problem}
\newtheorem{definition}{Definition}
\newtheorem*{remark*}{Remark}
\def\TCOMscale{0.5}
\def\indeg{\mathrm{deg_{in}}}
\def\outdeg{\mathrm{deg_{out}}}
\def\inarcs{A_{\mathrm{in}}}
\newcounter{sec}
\newcounter{index}[sec]
\def\whichyear{2019}
\newcommand\copyrighttext{%
  \footnotesize \textcopyright \whichyear \hspace{0.2em} IEEE. Personal use is permitted. This is the author's version of an article that has been published in this journal. Changes were made to this version by the publisher prior to publication. The final version of record is available at \url{\doiID}.}
\newcommand\copyrightnotice{%
\begin{tikzpicture}[remember picture,overlay]
\node[anchor=south,yshift=10pt] at (current page.south) {\fbox{\parbox{\dimexpr\textwidth-\fboxsep-\fboxrule\relax}{\copyrighttext}}};
\end{tikzpicture}%
}
\begin{document}
\maketitle

\copyrightnotice

\IEEEpeerreviewmaketitle

\begin{abstract}
The interdependency between different network layers is commonly observed in Cyber Physical Systems and communication networks adopting the dissociation of logic and hardware implementation, such as Software Defined Networking and Network Function Virtualization. This paper formulates an optimization problem to improve the survivability of interdependent networks by restructuring the provisioning relations. A characteristic of the proposed algorithm is that the continuous availability of the entire system is guaranteed during the restructuring of dependencies by the preservation of certain structures in the original networks. Our simulation results demonstrate that the proposed restructuring algorithm can substantially enhance the survivability of interdependent networks, and provide insights into the ideal allocation of dependencies.
\end{abstract}
\begin{IEEEkeywords}
interdependent networks; network survivability; cascading failure; network function virtualization; cyber physical systems.
\end{IEEEkeywords}

\section{Introduction}

\IEEEPARstart{M}{any} network systems encompass layering and integration of the layers in both explicit and implicit manners. For example, Software Defined Networking (SDN) decouples the control logic from forwarding functions to realize the flexibility and agility of communication networks. Also, Network Function Virtualization (NFV) involves separation of network function logic from hardware. The concept of separating logic from hardware implementations is also commonly adopted in Cyber Physical Systems (CPS), such as smart grids, in which computing capability manages physical entities.

The dissociation of logic and functions, which is effective for system flexibility, has accelerated the amount of layering and obscure dependencies in network systems. The work \cite{7367911} on software defined optical networks points out the dependency of logical nodes on physical nodes that provide physical paths for connections among logical nodes, as well as the dependency of physical nodes on the logical nodes \textcolor{black}{through SDN control messages, which define the operations} of the physical nodes. Similarly, it is revealed that NFV embraces the interdependency between Virtual Network Functions (VNF) and physical servers hosting the VNFs, when a virtualization orchestrator is recognized as one of the VNFs \cite{7498079}. Furthermore, the integration of a control information network and an electricity network seen in smart grids is a typical example of the interdependency of two different layers in CPSs \cite{infra_interdep_review}. This tendency of layering and collaborative functionality of layered networks is likely to be more evident for next-generation network systems.

\def\scale{0.45}
\begin{figure} 
\centering
	\subfloat[An interdependent network with two constituent graphs representing physical and logical network.\label{ex1}]{\includegraphics[width=\scale\linewidth]{./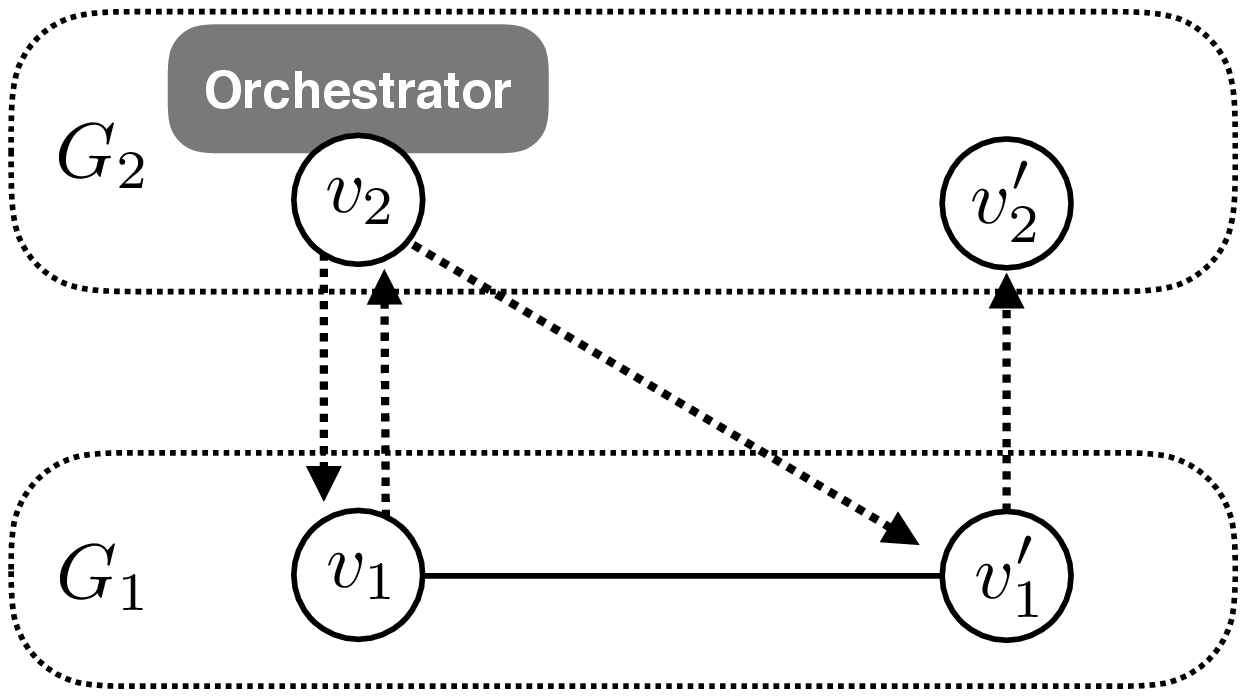} \label{1a}}\hfill
	\subfloat[Initial failure at a physical server $v_1$.\label{ex2}]{\includegraphics[width=\scale\linewidth]{./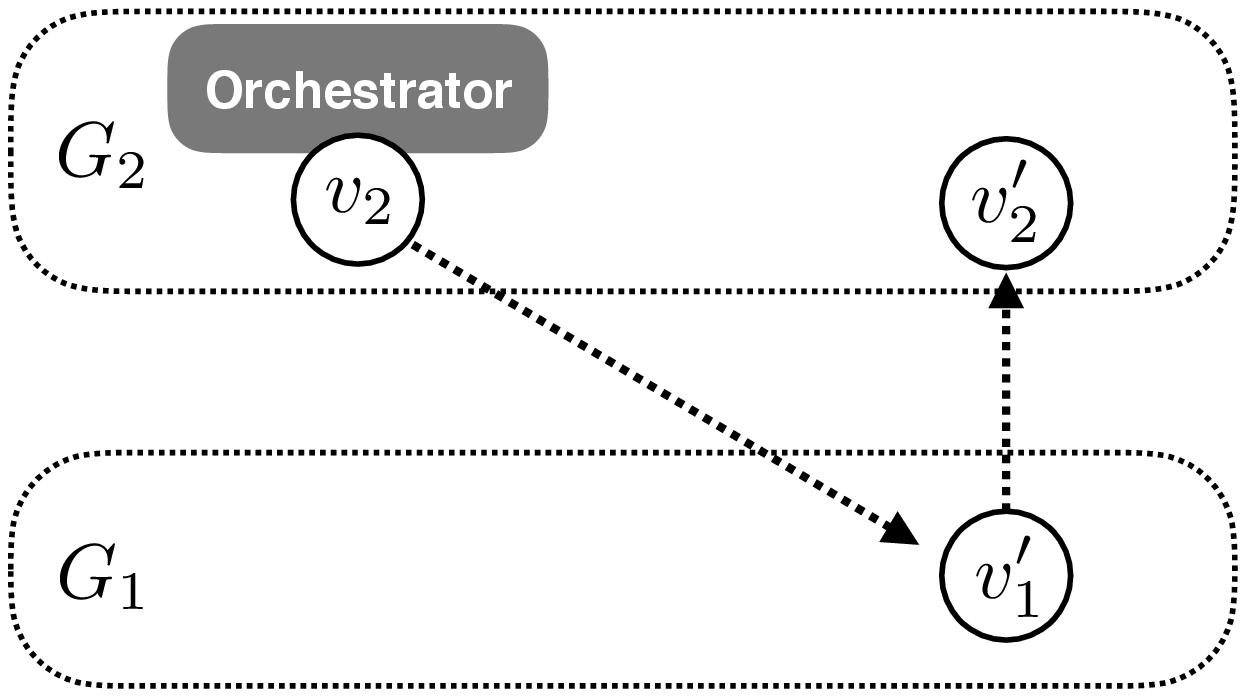} \label{1b}}\hfill\\
	\subfloat[Cascading failure affecting a logical node $v_2$.\label{ex3}]{\includegraphics[width=\scale\linewidth]{./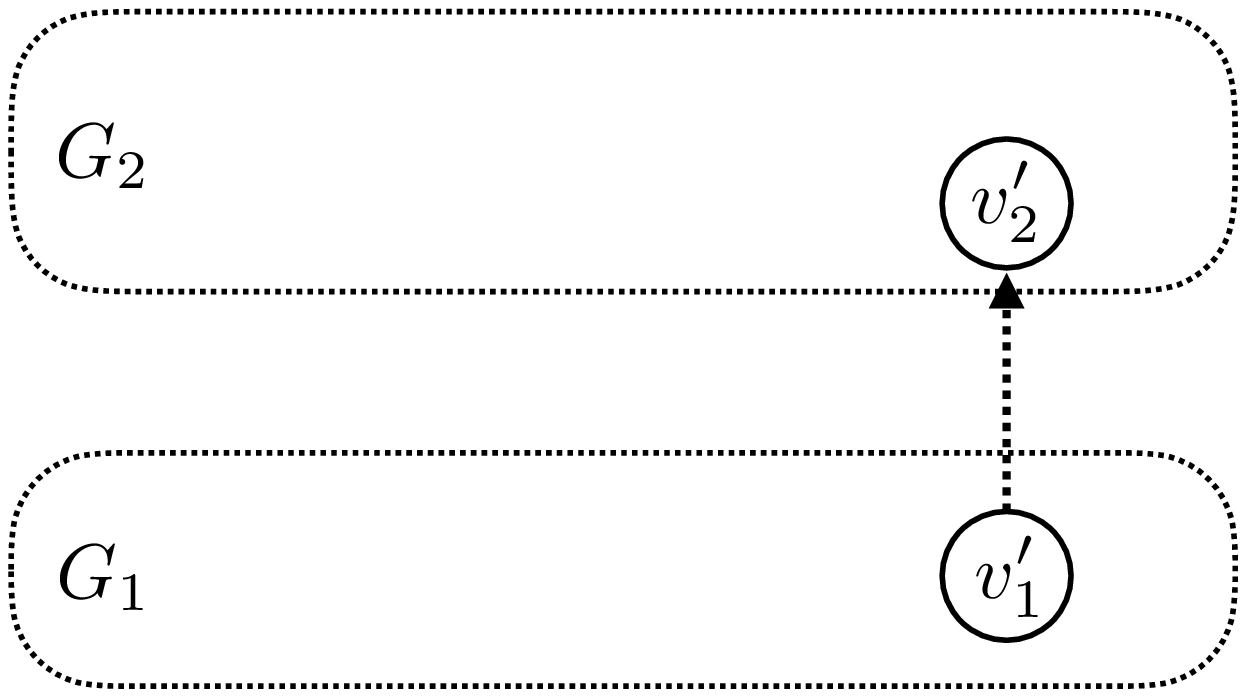} \label{1c}}\hfill
	\subfloat[Cascading failure affecting a physical server $v'_1$. The entire network becomes nonfunctional.\label{ex4}]{\includegraphics[width=\scale\linewidth]{./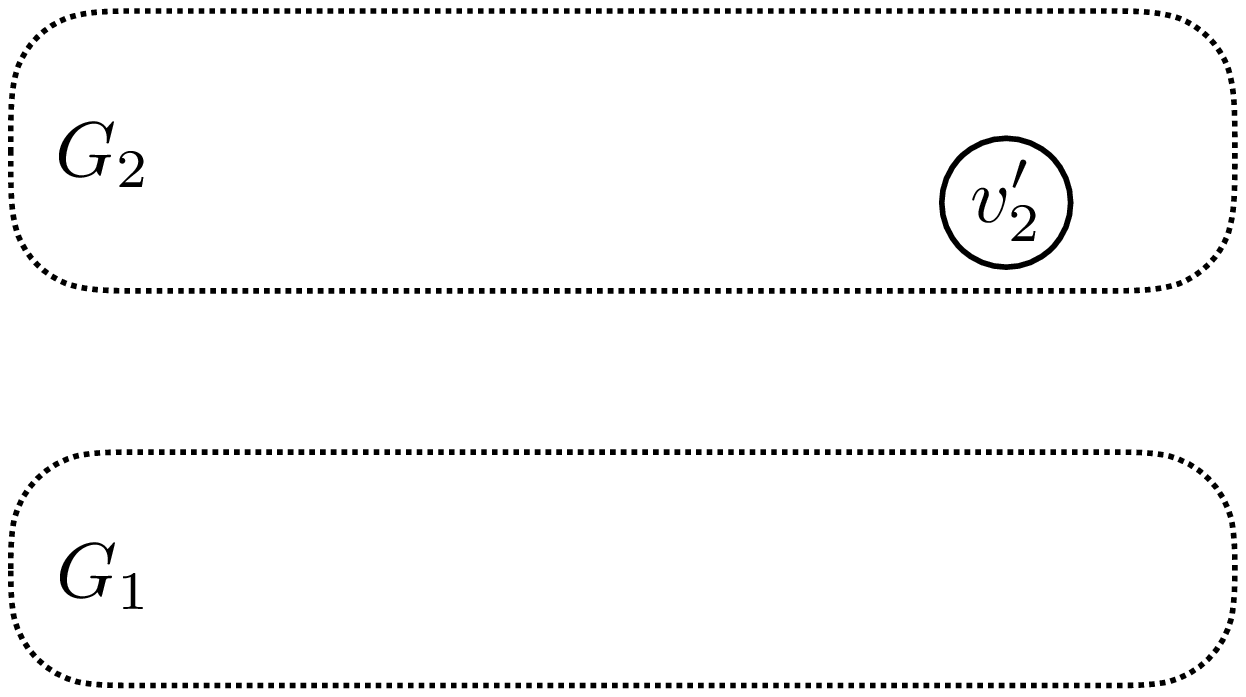} \label{1d}}
\caption{An example of cascading failure in an interdependent network representing the dependency between physical servers and NFVs. \label{fig1}} 
\end{figure}

However, it has been revealed that certain types of dependencies between different layers of networks can deteriorate the robustness of the entire interdependent system \cite{generalsurvey_cascading}. Consecutive multiple failure phenomena called \textit{cascading failures} exemplify the unique fragility of such network systems. In networks without interdependencies, a failure would influence a certain part of a network. Nonetheless, in networks with interdependencies, some nodes that are not directly connected to the failed portion can become nonfunctional due to the loss of service provisioning from nodes in other layers, which are directly influenced by the initial failure. 

Fig. \ref{fig1} shows an example of such a cascading failure, which starts as a single node failure of $v_1$ and results in the entire network failure. Suppose that a network $G_1$ consists of physical servers $v_1$ and $v'_1$, and $G_2$ represents logical computing nodes $v_2$ and $v'_2$ hosting VNFs. The orchestrator, which coordinates the mapping between physical and logical layer, is realized as one of the VNFs on $v_2$. The arcs from $G_1$ to $G_2$ ($(v_1, v_2), (v'_1, v'_2)$) illustrate the dependency of NFVs or computing nodes on the physical servers, while the arcs from $G_2$ to $G_1$ ($(v_2, v_1), (v_2, v'_1)$) indicate the dependency of physical servers on a logical node in terms of the flow of coordination messages from the orchestrator to the physical servers. When the physical server $v_1$ fails, the logical node hosting the orchestrator $v_2$ loses its dependent physical node $v_1$, and becomes nonfunctional. This induces another loss of the dependent node of $v'_1$, and eventually the single node failure causes a failure of the whole network.

Cascading failures can also lead to the malfunctioning of CPSs. In fact, it has been reported that some major electricity outages in smart grids, such as the 2003 nation-wide blackout in Italy \cite{Italyblackout}, and the 2004 blackout over 8 states in US and 2 provinces in Canada \cite{1525122}, were due to cascading failures induced from poorly designed dependencies between the electricity network and control information network.

Many contributions have been made since the first theoretical proposal on the cascading failure model by Buldyrev et al. in 2010 \cite{Buldyrev2010}. The pioneering works \cite{Buldyrev2010,gao2012networks} focus on analyzing the behavior of cascading failures rather than proposing design strategies. In contrast, some following works identify vulnerable topologies in interdependent networks to avoid such fragile structures in the design phase by investigating the relation between node degree and failure impacts \cite{7179465}, or evaluating the importance of nodes exploiting the algebraic expression of dependencies \cite{6849338}. Furthermore, other works propose design strategies in more realistic models to consider the impact of failures caused by a single component \cite{ICC17_ILPDesignDependency}, integrated factors within and between layers \cite{ICNC16_influencemodel}, or the heterogeneity of nodes in each layer \cite{ICNC16_hetero}.

This paper discusses a design problem for interdependent networks to improve their survivability, which is a measure of the robustness against a whole network failure, by modifying an existing network topology. The contribution that contrasts our work with other related works is the consideration of existing network facilities. Our method is aimed at redesigning a relatively small part of the existing network to enhance the survivability so that the entire network remains operational even during the restructuring process. In order to realize this continuous availability, a special type of dependency, whose removal does not influence the functionality of the entire system, is identified in the first step of our restructuring method. Our heuristic algorithm increases the survivability of entire systems by the relocations of these dependencies. 
While our previous work \cite{my_ICC} allows a node to have dependencies with any nodes in the other layer, this paper extends the model by considering geographical, economic, or logical accessibility of provisioning by nodes. These constraints are represented as clusters of nodes, and an interdependent network is modeled as a directed graph consisting of multiple clusters. The membership of a node in a specific cluster imposes restrictions on the nodes to which the node can provide support, and the nodes from which the node can receive support. Hence, possible modifications to the dependencies between nodes would vary, depending on the cluster to which a node belongs. Finally, our method is evaluated by simulations in different pseudo interdependent networks.

\section{Related Works}
Most of the preceding works on interdependent networks attempt to analyze the behavior of cascading failures in well-known random graphs, which have certain characteristics in degree distributions and underlying topology \cite{Buldyrev2010,gao2012networks}. Those works analyze the propagation of failures based on percolation theory developed in the field of random networks. Following the directions shown by a seminal work by Buldyrev et al. in \cite{Buldyrev2010}, more general models are discussed in \cite{gao2012networks}.

The works \cite{7179465, 6849338, ICNC16_influencemodel, ICNC16_hetero, ICC17_ILPDesignDependency} focus on the design aspect of interdependent networks. The relation between the impact of failures and interdependencies is empirically demonstrated to decide appropriate dependency allocations in \cite{7179465}. A method to evaluate the importance of nodes \textcolor{black}{in terms of} network robustness is proposed in \cite{6849338} by introducing a novel representation of interdependencies based on boolean algebra. This evaluation enables network operators to prioritize the protection of the nodes that contribute \textcolor{black}{more} to the robustness of the network. In \cite{ICNC16_influencemodel}, the authors consider dependency relations not only between layers but also within a single-layer. Combining multiple factors that make a node nonfunctional, their method adjusts the dependency of a node on the other nodes. The work in \cite{ICNC16_hetero} also considers the influence within a single-layer, supposing the heterogeneity of nodes. In this model, a network can have different types of nodes such as generating and relay nodes. Zhao et al. \cite{ICC17_ILPDesignDependency} formulate an optimization problem enhancing the system robustness, \textcolor{black}{defining} Shared Failure Group (SFG), a group of nodes that can simultaneously fail due to a cascading failure initiated by the same component.


Another branch of interdependent network research is recovery after failures \cite{STIPPINGER2014481, PhysRevE.92.052806, 7913673, 7440669, 7842042, NaturecommMajdandzic}. The works in \cite{STIPPINGER2014481, PhysRevE.92.052806, 7913673} analyze the behaviors of failure propagations when each node performs local healing, where a functioning node substitutes for the failed node by establishing new connections with its neighbors. The speed of further cascades and resulting network states are revealed by percolation theory \cite{STIPPINGER2014481, PhysRevE.92.052806} or steady state analysis in the belief propagation algorithm \cite{7913673}. Also, resource allocation problems, which consider the different roles of network nodes are discussed in \cite{7440669, 7842042, NaturecommMajdandzic}. The order of assigning repairing resources is a critical problem during the recovery phase when the amount of available resources is limited. The works in \cite{7440669, 7842042} propose node evaluation measurements to decide the allocation, while an equivalent problem in the phase diagram is discussed in \cite{NaturecommMajdandzic}.

Our work proposes a method to improve the survivability of interdependent networks, following the survivability definition in \cite{MITpaper2013Globecom}. Our work would be classified into the category of protection design methods before failures. Specifically, the proposed method is exploited in a redesign process of an existing network to enhance the survivability, while the existing works \cite{7179465, 6849338, ICC17_ILPDesignDependency, ICNC16_influencemodel, ICNC16_hetero} discuss the initial design of an entire network. Our protection method, considering the functionality during the redesign, would reduce the cost of survivability improvement in contrast to the entire reconstruction of the systems.

\section{Modeling and Motivating Example}
In this section, we present a mathematical model for describing interdependent networks, and we present a motivating example of our method. Section \ref{survofinterdep} summarizes related work \cite{MITpaper2013Globecom} defining the survivability for interdependent networks, which we adopt to evaluate the networks.

\subsection{Network Model}
An interdependent network consists of $k$ constituent graphs $G_i = (V_i, E_{ii})\ (1 \leq i \leq k)$ \textcolor{black}{and their} interdependency relationships, which are defined by sets of (directed) arcs $A_{ij}\ (1 \leq i, j \leq k,\ i \neq j)$ representing \textcolor{black}{the provisioning between} a pair of nodes in different graphs. Edges in $E_{ii} \subseteq V_i \times V_i$ are called \textit{intra-}edges because they connect pairs of nodes in the same network. In contrast, arcs in $A_{ij} \subseteq V_i \times V_j\ (i \neq j)$ are called \textit{inter-} or \textit{dependency} arcs. If there exists an arc $(v_i, v_j) \in A_{ij}\ (v_i \in V_i, v_j \in V_j)$, it means that a node $v_j$ has dependency on a node $v_i$. The node $v_i$ is called the \textit{supporting} node, and $v_j$ is a supported node. A node $v$ is said to be \textit{functional} if and only if it has at least one functional supporting node.

When an interdependent network is logically partitioned, each constituent graph $G_i$ has a clustering function $\kappa_i: V_i \longrightarrow \{1, 2, ..., \gamma_i\}$, where $\gamma_i \in \mathbb{N}$ is the number of clusters in $G_i = (V_i, E_{ii})$. Then, a graph $I_i^x = (W_i^x \subseteq V_i, E_{ii}(W_i^x))$ induced by a node set $W_i^x = \{v \mid \kappa_i(v) = x\ (1 \leq x \leq \gamma_i)\}$ is called a \textit{cluster}. Note that this definition insists that a node is in exactly one cluster.

In order to emphasize the dependency between constituent graphs, an interdependent network can be represented as a single-layer directed graph $G = (V, A)$, where $V := \bigcup_{i} V_i$, and $A := \bigcup_{\{(i, j) \mid i \neq j\}} A_{ij}$ by abbreviating intra-edges. With this notation, a node $v$ is said to be \textit{functional} if and only if $\indeg(v) \geq 1$. Note that all the discussions in the rest of this paper follow this single-layer graph representation.

Additionally, we introduce a different notation of arcs with respect to their source nodes. Let $A(v) \subseteq A$ represent a set of arcs whose source node is $v \in V$. To identify each arc during the restructuring process, where some arc temporarily loses its destination, each arc is denoted as $(v, \cdot)_m\ (m = 1, ..., \outdeg(v))$. The index $m$ is a given fixed identification number for each arc in $A(v)$. Hence, every arc in $A$ can be specified by providing source node $v$ and its identification number $m$.

A set of constituent graphs is totally ordered by the number of nodes that are the source of at least one intra-arc: $|V_i^{\mathrm{out}}|$, where $V_i^{\mathrm{out}} := \{v \in V_i \mid A(v) > 0\}$. A constituent graph that has the least number of nodes with outgoing arcs is named the minimum supporting constituent graph $G_i$: $|V_i^{\mathrm{out}}| \leq \min_{j} |V_j^{\mathrm{out}}|$.

\subsection{Survivability of Interdependent Networks \label{survofinterdep}}
Parandehgheibi et al. \cite{MITpaper2013Globecom} propose an index that quantifies the survivability of interdependent networks against cascading failures exploiting the \textit{cycle hitting set}, and they prove that the computation of the survivability is NP-complete. They show that a graph needs to have at least one directed cycle in order to maintain some functional nodes; in other words, the existence of one cycle prevents an interdependent network from its entire failure. Thus, the survivability of interdependent networks is defined as the cardinality of the minimum cycle hitting set whose removal brings non-functionality for the entire network. Note that a cycle hitting set $S$ is a set of nodes such that any cycle $C = (V(C), E(C))$ in a given graph $G = (V, A)$ has at least one node in the hitting set: $S \cap V(C) \neq \emptyset, \ \forall C \in \mathcal{C}(G)$, where $\mathcal{C}(G)$ is the set of all cycles in the given graph. This definition implies that the entire failure of an interdependent network occurs when the corresponding graph becomes acyclic. Let $H(G)$ denote a cycle hitting set with the minimum cardinality: $|H(G)| := \min_{S \in \mathcal{S}} |S|$, where $\mathcal{S}$ is the set of all the cycle hitting sets in $G$. Formally, the survivability of an interdependent network $G$ is the cardinality of the minimum cycle hitting set, $|H(G)|$.

\def\scale{0.85}
\begin{figure}[t]
\centering
\begin{minipage}{0.47\columnwidth}
\centering
\includegraphics[width=\scale\textwidth]{./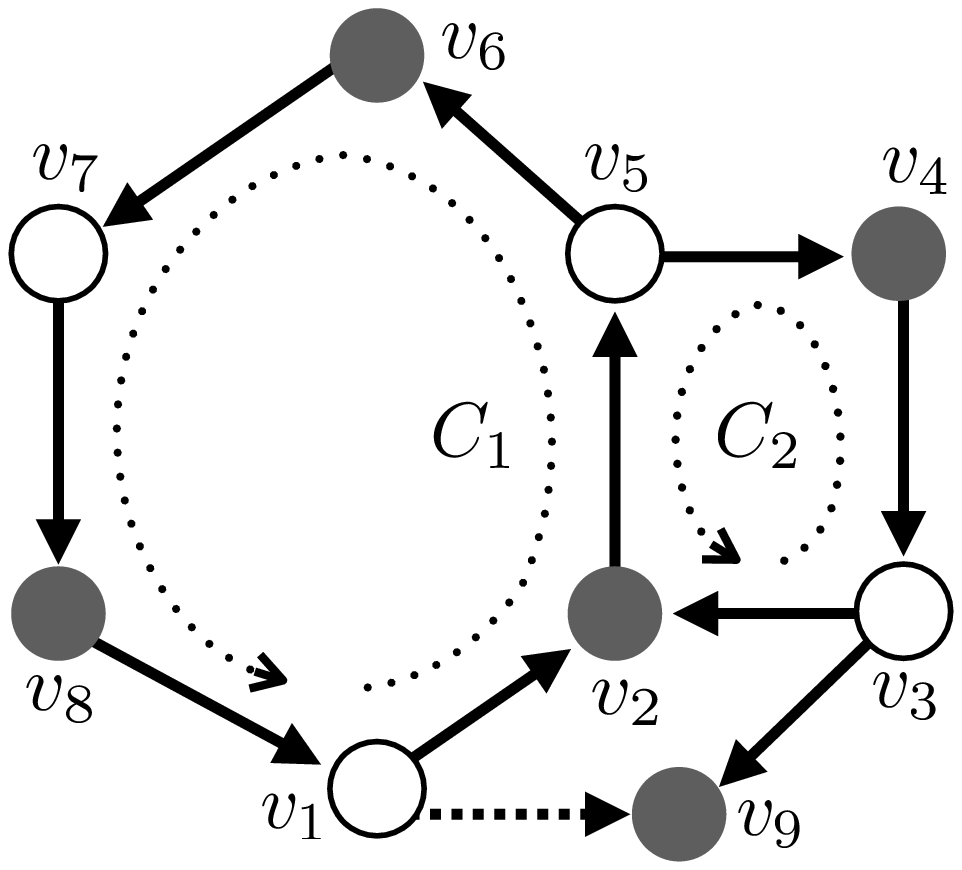}
\caption{Graph $G$ with $(v_1, v_9)$. \label{original1}}
\end{minipage}\hfill
\begin{minipage}{0.47\columnwidth}
\centering
\includegraphics[width=\scale\textwidth]{./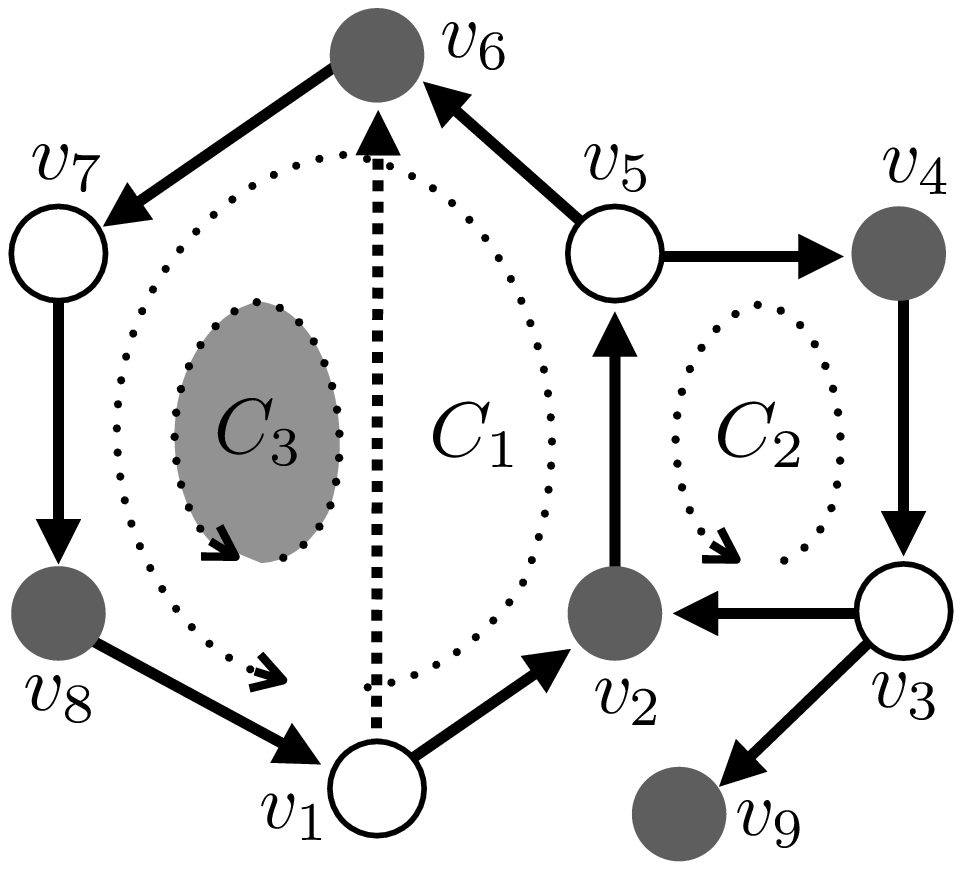}
\caption{Graph $G'$ with $(v_1, v_6)$. \label{modified1}}
\end{minipage}
\end{figure}

\subsection{Motivating Example \label{motivating}}
Adopting the survivability definition shown above, improving survivability would be equivalent to increasing the number of disjoint cycles in a graph. Figs. \ref{original1} and \ref{modified1} show an example comparing two similar interdependent networks.

In graph $G$ in Fig. \ref{original1}, there exists two cycles: $C_1$ and $C_2$. If $v_2$, which is in both $V(C_1)$ and $V(C_2)$, becomes nonfunctional because of a failure, all the nodes in $G$ eventually lose their supporting nodes and become nonfunctional: $H(G) = \{v_2\}$. On the other hand, no single node failure can destroy all the three cycles in $G'$ in Fig. \ref{modified1}, while a two-node failure can make it acyclic (e.g. $H(G') = \{v_2, v_7\}$). Therefore, the graph $G'$ is more survivable than $G$, since $1 = |H(G)| < |H(G')| = 2$, although they differ only in the destination node of one dependency arc ($(v_1, v_9)$ in $G$ or $(v_1, v_6)$ in $G'$). Supposing that $G$ is an existing topology of a network, a method that relocates $(v_1, v_9)$ to $(v_1, v_6)$ can achieve an enhancement of the survivability.

\section{Problem Formulation}
\subsection{Assumptions \label{assumptions}}

This paper deals with the case in which interdependent networks have two types of homogeneous constituent networks with identical dependencies ($k = 2$). However, our discussion with the restriction on $k$ can be easily extended to more general cases. In more advanced network models, each constituent network can have different types of nodes, such as independently functional generating nodes and relay nodes, which need provisioning from a generating node via paths of intra-edges \cite{ICNC16_hetero}. Nevertheless, for simplicity, this work follows the assumption in \cite{MITpaper2013Globecom} that each node in a constituent network is directly connected to a \textcolor{black}{reliable} conceptual generating node \textcolor{black}{by a reliable edge} (homogeneous constituent graphs). Moreover, it is assumed that each supporting node provides a unit amount of support that is enough for a supported node to be operational (identical dependencies), following the same model in \cite{MITpaper2013Globecom}.

Additionally, this paper presumes that each cluster $x$ receives some support from at least one of the clusters that are supported by cluster $x$. In other words, this presumption excludes the case that a cluster does not receive provisions from any of the clusters that the cluster is supporting.

\subsection{Requirement Specification \label{spec}}
One aspect contrasting our scheme to other works is the consideration to improve the survivability of existing interdependent networks by changing some topological structures.
Because all the nodes need to remain functional even during the relocations of dependency relations, it is necessary to avoid the loss of all supporting nodes for any node at any stage of the restructuring. In other words, each node needs to be survivable from a cascading failure, which requires the direct or indirect support by the nodes in directed cycles. This constraint is formally represented as the following rule for the live restructuring.
\begin{enumerate}
\item Every node remains reachable from a node in a directed cycle via at least one directed path at any stage of the restructuring. \label{incom_rule}
\end{enumerate}

In addition to guaranteeing the continuous availability, the amount of provisioning provided by each supporting node should remain the same after the restructuring in order to consider the capability of each node. The capability could be, for example, the limit on electricity generation, computation performance, or the number of ports available. 
\begin{enumerate}
\setcounter{enumi}{1}
\item The number of supports that a node provides must remain less than or equal to its original provisioning capability. \label{outgo_rule}
\end{enumerate}

Furthermore, depending on which cluster a node in graph $G_i$ belongs to, the node has a constraint on clusters in $G_j$ that it can support. The constraint is given by a supportability function $\sigma_{ij}: V_i \longrightarrow 2^{\gamma_j}$, where $2^{\gamma_j}$ is the power set of the cluster indices in a constituent network $G_j$. This means that a node $v\,  (\in V_i)$ can provide its support to the nodes in the clusters of $G_j$ given by the supportability function. This specification corresponds the geographical, economic, or logical constraints on the accessibility of supports from a node to specific groups of nodes. For example, it is impossible for information control node $v$ to have electricity supply from node $u$ if $v$ and $u$ are geographically far apart or managed by different administrative institutions. The geographical or administrative domain is shown as a cluster in each constituent graph, and dependency relations of the nodes should be closed within a set of permitted nodes, which are geographically close, or managed by the same company or allied companies, since each cluster should be independent from the outsiders. This constraint relating to network clustering is simply expressed as follows.
\begin{enumerate}
\setcounter{enumi}{2}
\item All the provisionings from a node $u$ are directed towards the nodes in the clusters that $u$ can support, as designated by the supportability function $\sigma_{ij}$.
\end{enumerate}

\subsection{Clustered $\Delta H$ Problem}
This section formulates the clustered $\Delta H$ problem, which is aimed at enhancing the survivability of a given interdependent network with clusters by restructuring dependency relationships, considering the continuous availability, supporting capability, and clustering constraint of each node. 

\textcolor{black}{Considering the continuous availability} of an existing network during restructuring leads to the formulation of a gradual reconstruction problem, where no relocation of two or more different arcs is conducted at a time. Each phase relocating one arc is named a \textit{step}. Let $G^s = (V, A^s)$ denote the graph representing the interdependent network topology at step $s$. The improved interdependent network $G^{s+1}$ after step $s$ consists of a node set $V$, \textcolor{black}{which is the same node set as in graph} $G^s$, and an arc set $A^{s+1}$ amended by the relocation of an arc $(u, v) \in A^s$ to $(u, v')$, where $v' \in V$ is a new destination for the arc $(u, v)$.

The clustered $\Delta H$ problem is to maximize the difference \textcolor{black}{in survivability} between a given interdependent network, which is recognized as $G^0$, and the resulting network after a sequence of consecutive improvements. The resulting network is represented as $G^f$, where $f$ denotes the step at which the last arc relocation is completed. Formally, the objective is to maximize the difference between $|H(G^0)|$ and $|H(G^f)|$, which is defined as $\Delta H$.

\begin{problem*}[Clustered $\Delta H$ Problem]
For a given $G^0 = (V = \bigcup_i V_i, A^0)$, the number of clusters $\gamma_i \in \mathbb{N}$ in each constituent graph $G_i$, a clustering function $\kappa_i: V_i \longrightarrow \{1, 2, ..., \gamma_i\}$ for each constituent graph $G_i$, and supportability functions $\sigma_{ij}: V_i \longrightarrow 2^{\gamma_j}$, maximize $\Delta H := |H(G^f)| - |H(G^0)|$, where $G^{s+1} = (V, A^{s+1})\ (0 \leq s \leq f-1)$ is obtained by the relocation of the destination of a single arc in $A^{s}$: $A^{s+1} = A^{s} \setminus (u, v) \cup (u, v')$, satisfying 
\begin{enumerate}
\item $\indeg(v)_{G^s} \geq 1\ \ \forall v \in V$, \label{cond2}
\item $\outdeg(v)_{G^{s+1}} = \outdeg(v)_{G^s}\ \ \forall v \in V$, \label{cond1}
\item $\kappa_j(v \in V_j) \in \sigma_{ij}(u \in V_i)\ \ \forall (u, v) \in A^s$. \label{cond3}
\end{enumerate}
\end{problem*}

These three conditions correspond to the three rules described in Section \ref{spec}. The second and third conditions are easily derived from the corresponding rules. Lemma \ref{incom_is_enough} shows the equivalence of the condition \ref{cond2} and Rule \ref{incom_rule}.

\begin{lemma}
When $\indeg(v)_G \geq 1\ (\forall v \in V)$ in a connected directed graph $G = (V, A)$, (a) $G$ has at least one directed cycle, and (b) any node $v \in V$ is reachable from a node $u \in V$ that is contained in a directed cycle.
\label{incom_is_enough}
\end{lemma}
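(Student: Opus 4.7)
The plan is to derive both parts from a single pigeonhole argument on a backward walk along incoming arcs. The hypothesis $\indeg(v)_G\ge 1$ for every $v\in V$ means that from any node we can always step backward to a predecessor, so an infinite backward walk exists; finiteness of $V$ then forces a repeated node, and the segment between the two occurrences is a directed cycle.

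For part (a), I would fix an arbitrary $v_0\in V$ and inductively define a sequence $v_0,v_1,v_2,\dots$ by choosing, at each stage, some $v_{t+1}$ with $(v_{t+1},v_t)\in A$; this is possible precisely because $\indeg(v_t)\ge 1$. Since $V$ is finite, there exist indices $t<t'$ with $v_t=v_{t'}$, and the arcs $(v_{t'},v_{t'-1}),(v_{t'-1},v_{t'-2}),\dots,(v_{t+1},v_t)$ reversed yield a directed cycle in $G$.

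For part (b), let $R^{-}(v):=\{u\in V\mid u\text{ reaches }v\text{ via a directed path}\}\cup\{v\}$ be the set of ancestors of $v$ (including $v$ itself). I would repeat the backward-walk construction above starting from $v_0:=v$ and observe the key closure property: if $v_t\in R^{-}(v)$ and $(v_{t+1},v_t)\in A$, then $v_{t+1}\in R^{-}(v)$, because $v_{t+1}$ reaches $v$ via $v_t$. Hence the entire walk stays inside the finite set $R^{-}(v)$, so again some node repeats and a directed cycle $C$ is produced whose vertices all lie in $R^{-}(v)$. Picking any $u\in V(C)$ gives a node on a directed cycle that reaches $v$ by definition of $R^{-}(v)$, which is exactly the conclusion of (b); note that (b) specialized to any $v$ also re-establishes (a).

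The argument is essentially routine, so I do not anticipate a real obstacle; the only point that requires care is verifying that the backward walk cannot leave $R^{-}(v)$, since otherwise the cycle produced would not be guaranteed to reach $v$. The connectedness hypothesis is not actually needed for either conclusion — the in-degree condition alone suffices — but I would mention this only in a brief remark so as not to distract from the statement as given.
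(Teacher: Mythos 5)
Your proposal is correct and follows essentially the same route as the paper's proof: a backward walk along incoming arcs from an arbitrary node, with the pigeonhole principle forcing a repeated node and hence a directed cycle. Your explicit closure argument that the walk stays inside the ancestor set $R^{-}(v)$ makes part (b) more rigorous than the paper's terse treatment, but the underlying idea is identical.
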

\begin{proof}[Proof] 
$\indeg(v)_G \geq 1\ (\forall v \in V)$ insists that any node $v$ has at least one parent $v'$. The path $v \leftarrow v' \leftarrow ...$ composed by repeating the trace of parents can be acyclic until the length of the path is $|V - 1|$. However, the $|V|$th node must have at least one parent from the assumption. Thus, the pigeonhole principle indicates that it is necessary that the path forms a directed cycle.
\end{proof}

\subsection{Problem Analysis}
This section provides the analysis on the trivial optimal case of the clustered $\Delta H$ problem with a special setting, where each of constituent graph consists only of one cluster. Let $\rho((u, \cdot)_m)$ denote the number of relocations that arc $(u, \cdot)_m \in A$ experienced during the restructuring process. Note that $\sum_{u \in V} \sum_{m=1}^{\outdeg(u)} \rho((u, \cdot)_m) = f.$ 

From the definition, the optimum survivability cannot exceed the number of supporting nodes, \textcolor{black}{which each have} at least one outgoing arc, in the minimum supporting constituent graph $G_i$. This is because a set of such nodes covers all the directed cycles in an interdependent network $G$. This observation implies that the optimum survivability is achieved when every node $v_i \in V_i$ of $G_i$ has an injective mapping to a node in $V_j\ (j \neq i)$. In other words, for each node $v_i$ in $G_i$, there exists at least one unique disjoint cycle whose length is 2 with $v_j$ in $G_j$. The following lemma gives a sufficient condition to reach the ideal state by repeated relocations while preserving the problem constraints.

\begin{figure}[t]
\centering
\begin{minipage}[t]{0.48\columnwidth}
\centering
\includegraphics[width=0.9\columnwidth]{./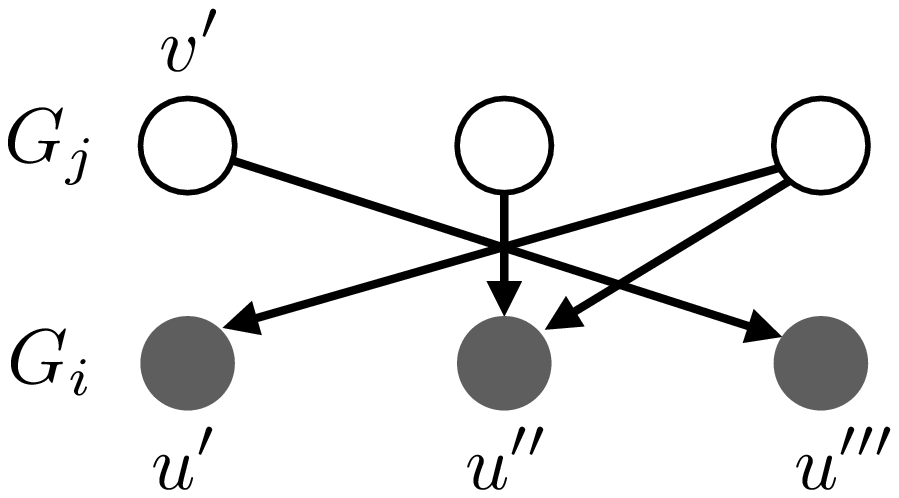}
\caption{Original Dependencies, where $(v', u')$ is missing. Note that this figure only shows $A_{ji}$. The symmetric discussion can be done for $A_{ij}$. \label{lemma21}}
\end{minipage}\hfill
\begin{minipage}[t]{0.48\columnwidth}
\centering
\includegraphics[width=0.9\columnwidth]{./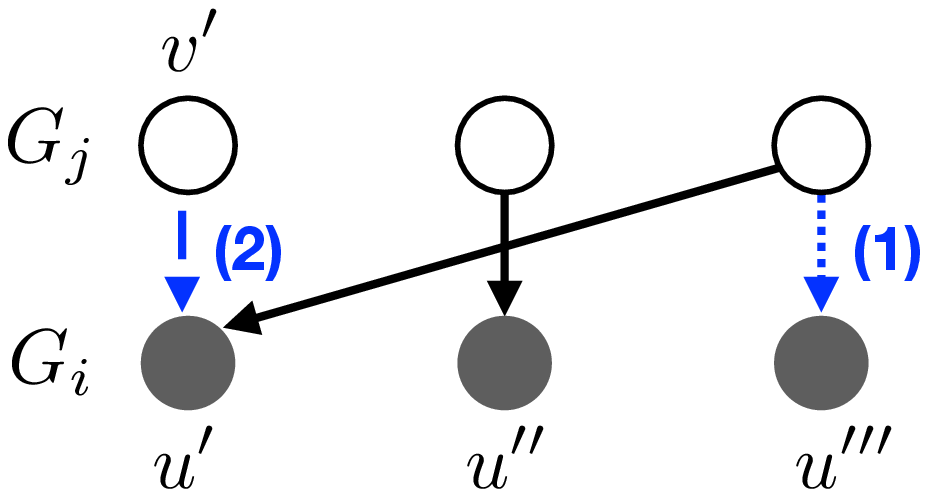}
\caption{Relocation Steps (1) to maintain the functionality of $u'''$, and (2) to form a length-2 cycle with $v'$ and $u'$. \label{lemma22}}
\end{minipage}
\end{figure}

\begin{lemma}
When the number of relocations for each arc $\rho((u, \cdot)_m)$ is not upper bounded, \textcolor{black}{in order to have the optimum restructuring, it is sufficient that} the minimum supporting constituent graph $G_i$ satisfies $|V_j| < \sum_{u \in V_i} |A(u)|$ and $\sum_{v \in V_j} |A(v)| > |V_i|\ (j \neq i)$. Then, the optimum survivability becomes $|V_i^{\mathrm{out}}|$.
\label{length2cycles}
\end{lemma}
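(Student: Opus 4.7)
The plan is to sandwich $|H(G^f)|$ between matching bounds both equal to $|V_i^{\mathrm{out}}|$. First, I would establish the easy upper bound by noting that, in the single-layer representation, every arc has its tail in $V_i \cup V_j$ and, with $k=2$, the resulting graph is bipartite between $V_i$ and $V_j$. Hence every directed cycle must contain at least one vertex of $V_i$, and that vertex necessarily has out-degree at least one, i.e.\ it lies in $V_i^{\mathrm{out}}$. Therefore $V_i^{\mathrm{out}}$ is itself a cycle hitting set at every step $s$, giving $|H(G^s)| \le |V_i^{\mathrm{out}}|$.

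For the matching lower bound I would construct, by a sequence of admissible relocations, a final graph $G^f$ in which every $u \in V_i^{\mathrm{out}}$ lies on its own vertex-disjoint length-$2$ cycle with a distinct partner $v(u) \in V_j^{\mathrm{out}}$. The construction processes the nodes of $V_i^{\mathrm{out}}$ one at a time: for the current $u$, pick an unused partner $v$ and, whenever $(u,v)$ or $(v,u)$ is still missing, perform the two-step relocation illustrated in Figs.~\ref{lemma21} and~\ref{lemma22}. That is, before redirecting the arc that will land at $u$ (respectively at $v$), first re-route a spare arc so that the node about to lose its only supporter is rescued. Since relocations never alter any out-degree, Condition~\ref{cond1} holds automatically; Condition~\ref{cond3} is vacuous when $\gamma_i=\gamma_j=1$; only Condition~\ref{cond2} needs to be actively maintained. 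Once the construction terminates, the $|V_i^{\mathrm{out}}|$ vertex-disjoint 2-cycles force every cycle hitting set of $G^f$ to contain at least $|V_i^{\mathrm{out}}|$ nodes, matching the upper bound.

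The technical crux is to verify that the two cardinality hypotheses supply enough slack for this construction. The minimum-supporting property of $G_i$ already yields $|V_i^{\mathrm{out}}| \le |V_j^{\mathrm{out}}| \le |V_j|$, so an unused partner always exists. For the rescue step, the hypothesis $\sum_{v \in V_j} |A(v)| > |V_i|$ combined with Condition~\ref{cond2} forces, by pigeonhole, some node of $V_i$ to have at least two incoming arcs, producing a spare arc that can be redirected to the endangered vertex; the symmetric hypothesis $|V_j| < \sum_{u \in V_i} |A(u)|$ supplies the corresponding slack on the $V_j$ side. The hard part will be the inductive bookkeeping that shows this slack survives the commitment of arcs to finalized 2-cycles: after each iteration, enough surplus must remain to rescue future endangered vertices. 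I would address this by always drawing the rescue arc from a vertex that currently contributes to the largest reservoir of incoming arcs, so that the supply of spares remains balanced across subsequent iterations; since $\rho((u,\cdot)_m)$ is unbounded, any arc may be re-used as many times as the argument requires.
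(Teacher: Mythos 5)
Your proposal follows essentially the same route as the paper's own proof: the upper bound comes from observing that $V_i^{\mathrm{out}}$ hits every cycle in the bipartite dependency graph, and the lower bound is achieved by pairing each $u \in V_i^{\mathrm{out}}$ with a distinct partner in $V_j^{\mathrm{out}}$ and forming vertex-disjoint length-$2$ cycles via the same two-step rescue-then-redirect relocation (with the pigeonhole argument on the cardinality hypotheses supplying the spare incoming arc for the endangered node). If anything, you are more candid than the paper about the unresolved inductive bookkeeping needed to show the slack persists across iterations --- a point the paper's proof also leaves informal --- so the two arguments are at essentially the same level of rigor.
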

\begin{proof}
The maximum survivability achievable by restructuring is equal to the number of nodes \textcolor{black}{that have at least one outgoing arc} $|V_i^{\mathrm{out}}|$ in the minimum supporting constituent graph $G_i = (V_i, E_{ii})$, because the removal of such nodes from $G_i$ must destroy all the cycles between $G_i$ and another constituent graph. \textcolor{black}{In order to achieve the maximum survivability via the restructuring process}, it is necessary that each node $u \in V_i^{\mathrm{out}}$ belongs to a cycle whose length is 2. Otherwise, the cycle contains another node $w \in V_i^{\mathrm{out}}$, and the removals of such $w$'s make $u$ lose all incoming arcs. Note that a node in $V_i \setminus V_i^{\mathrm{out}}$ is never a part of directed cycles, since it has no outgoing arc.

Suppose that we have the minimum supporting constituent graph $G_i$ and another constituent graph $G_j$ that satisfy the two conditions in the lemma. From the definition of the minimum supporting constituent graph, we can make $|V_i^{\mathrm{out}}|$ pairs of nodes \textcolor{black}{$\langle u \in V^{\mathrm{out}}_i, v \in V_j^{\mathrm{out}} \rangle$}, which are expected to form a length-2 cycle together after restructuring, so that no \textcolor{black}{two nodes in $V_i$ are paired} with the same node in $V_j^{\mathrm{out}}$. 

Figs. \ref{lemma21} and \ref{lemma22} illustrate a general example of a restructuring process to form such a length-2 cycle by dependency arc relocations. Note that the figures only show $A_{ji}$, but the symmetric argument can be done for $A_{ij}$. Let \textcolor{black}{$\langle u' \in V^{\mathrm{out}}_i, v' \in V_j^{\mathrm{out}} \rangle$} be a pair such that $(v', u') \notin A_{ji}$. \textcolor{black}{In order to make a length-2 cycle between $v'$ and $u'$, the arc $(v', u''')$ should be relocated to $(v', u')$. However, the relocation makes $u'''$ lose all of its incoming arc. The loss of incoming arc of $u'''$ is always avoided by relocating one of the arcs incoming to $u''$ to $u'''$ (See Figs. \ref{lemma21} and \ref{lemma22} (1)). The supposition in the lemma and the pigeonhole principle suggest the existence of at least one node $u'' \in V_i$ that has two incoming arcs. After the adjustment of the provisioning for $u'''$ by this relocation, the arc $(v', u''')$ can be relocated to $(v', u')$ (See \ref{lemma21} and \ref{lemma22} (2)).}  

For a pair $\langle u' \in V^{\mathrm{out}}_i, v' \in V_j^{\mathrm{out}} \rangle$ such that $(u', v') \in A_{ij}$, similar relocations are always possible, because $|V_j| < \sum_{u \in V_i} |A(u)|$. Thus, these relocations eventually achieve the maximum survivability by forming $|V_i^\mathrm{out}|$ length-2 cycles that each consist of a pair $\langle u \in V^{\mathrm{out}}_i, v \in V_j^{\mathrm{out}} \rangle$.
\end{proof}

Some propositions similar to Lemma \ref{length2cycles} appear in related literature \cite{ICC17_ILPDesignDependency, 7935429}. The sufficient condition provided in Lemma \ref{length2cycles} allows the entire restructuring of inter-arcs by repeated relocations of each arc. Therefore, the $\Delta H$ problem is recognized as a design problem of an entire interdependent network discussed in \cite{ICC17_ILPDesignDependency} under these assumptions. Also, the work \cite{7935429} claims that such a one-to-one provisioning relation realizes the robustness, while assuming certain structural characteristics of random graphs.

However, it is unrealistic to relocate a dependency arc many times, when considering the overhead of the changes of provisioning relations in network systems. Therefore, the following part of our paper discusses the case where the number of relocations are strictly restricted: $\rho((u, \cdot)_m) \leq 1\ (1 \leq m \leq \outdeg(u),\ \forall u \in V)$. Under this condition, it cannot be guaranteed to obtain the optimum survivability even when the sufficient condition above holds.

\begin{figure*}[t]
\centering
\begin{minipage}[t]{0.3\textwidth}
\centering
\includegraphics[width=0.75\textwidth]{./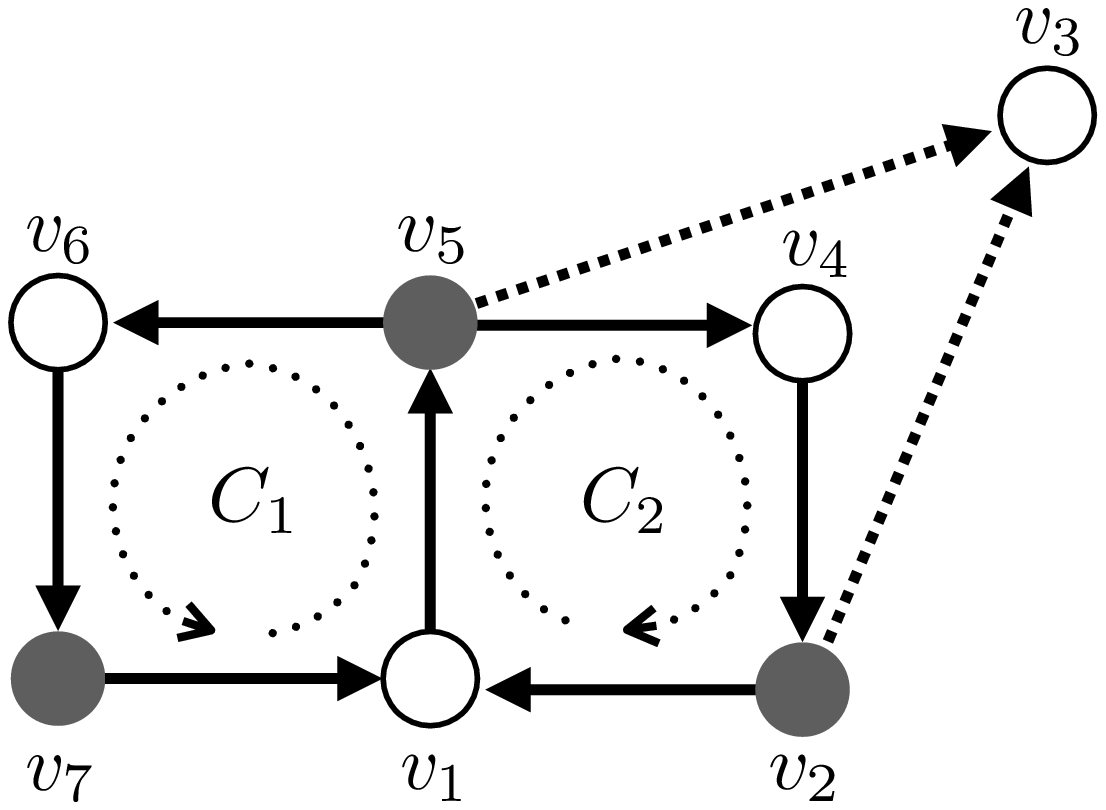}
\caption{Original graph $G$ with Marginal Arcs $(v_2, v_3)$ and $(v_5, v_3)$. \label{original2}}
\end{minipage}\hfill
\begin{minipage}[t]{0.3\textwidth}
\centering
\includegraphics[width=0.75\textwidth]{./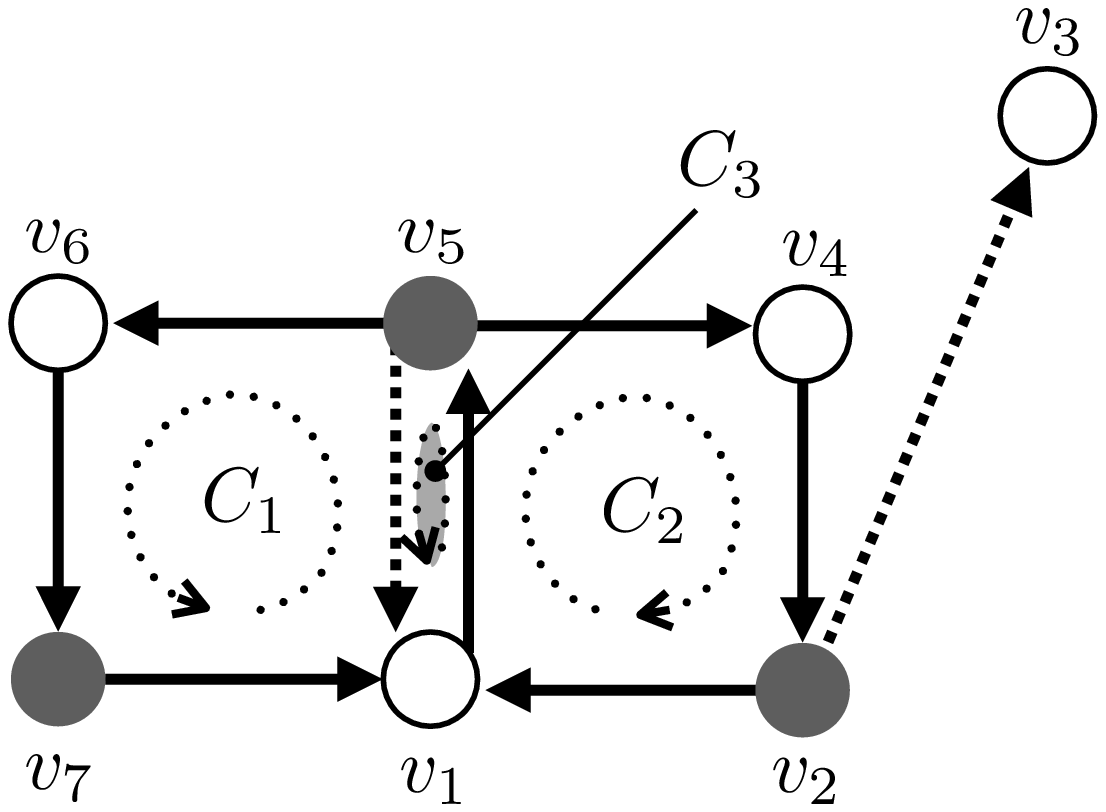}
\caption{Modified graph $G'$ with a new arc $(v_5, v_1)$. \label{modified2_1}}
\end{minipage}\hfill
\begin{minipage}[t]{0.3\textwidth}
\centering
\includegraphics[width=0.75\textwidth]{./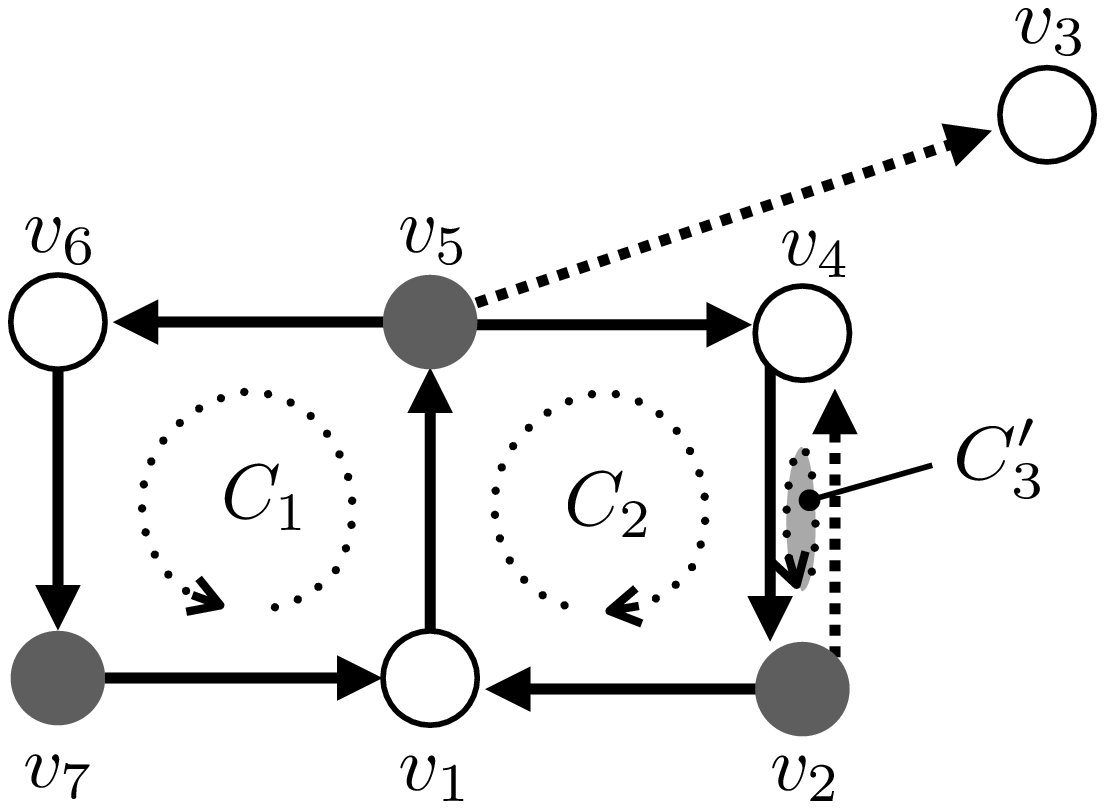}
\caption{Modified graph $G''$ with a new arc $(v_2, v_4)$. \label{modified2_2}}
\end{minipage}
\end{figure*}

\section{heuristic algorithm for $\Delta H$ Problem}
This section proposes a heuristic algorithm for the clustered $\Delta H$ problem. Before providing the details of our heuristic algorithm, we first define special types of arcs named Marginal Arcs (MAs), which are candidates for the relocations in Section \ref{restructure_benefit}. Then, the heuristic algorithm, which consists of two algorithms: Find-MAs and $\Delta H$, is described. The Find-MAs algorithm enumerates all the arcs that match the definition of MAs. With the set of MAs found by the Find-MAs algorithm, the $\Delta H$ algorithm decides appropriate relocations of the dependency arcs in the set, considering disjointness of newly formed cycles, so that it can improve the survivability of a given network. 

After the discussion for a simple case with only one cluster in each constituent graph in Sections \ref{findMAdesc} to \ref{deltasec}, Section \ref{clustercase} explains how the other cases with multiple clusters are broken down into the simple case.

\subsection{Restructuring of Dependencies \label{restructure_benefit}}
In order to guarantee continuous availability, it is necessary to classify the dependency arcs into either changeable or fixed arcs. However, it is computationally difficult to know the classification beforehand under the condition of $\rho((u, \cdot)_m) \leq 1\ (\forall u \in V)$, because this process involves enumeration of all the permutations of arc relocations and their combinations of destinations. Thus, in this paper, the classification is simplified by using a sufficient condition, while this enumeration is likely to become another optimization problem for a further investigation.

As observed in Section \ref{motivating}, increasing disjoint cycles in a given network could be an important factor to enhance overall survivability. Hence, our method maintains all existing cycles, which is sufficient to avoid cascading failures, and tries to reallocate the destinations of the arcs that do not belong to directed cycles and that do not make their descendant nodes nonfunctional. Let the arcs that are not in any cycles in a given directed graph $G = (V, A)$ be called \textit{Marginal Arcs} (MAs). Formally, the set $M \subsetneq A$ of MAs is defined as
\begin{align}
M := \{(u, v) \mid (u, v) \notin A(C)\ \forall C \in \mathcal{C}(G) \}.
\label{MAdef}
\end{align}

\begin{lemma}
A removal of any marginal arc never decreases the survivability of an interdependent network: $|H(G)| \leq |H(\overline{G})|$, where $G$ is a given graph, and $\overline{G}$ is the graph obtained by the removal.
\label{doesntdecrease}
\end{lemma}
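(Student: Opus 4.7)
The plan is to establish the stronger fact that removing a marginal arc preserves the set of directed cycles of the graph, from which the claimed inequality follows immediately (in fact with equality).

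First I would fix a marginal arc $(u,v)\in M$ and set $\overline{G}=(V, A\setminus\{(u,v)\})$. Because $\overline{G}$ is a subgraph of $G$, every directed cycle in $\overline{G}$ is also a directed cycle in $G$, so $\mathcal{C}(\overline{G})\subseteq \mathcal{C}(G)$. For the reverse inclusion, I would invoke the defining property of marginal arcs in (\ref{MAdef}): by assumption no cycle of $G$ traverses $(u,v)$, so every $C\in \mathcal{C}(G)$ uses only arcs of $A\setminus\{(u,v)\}$ and therefore remains a cycle in $\overline{G}$. Combining both directions yields $\mathcal{C}(\overline{G})=\mathcal{C}(G)$.

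Next I would observe that a cycle hitting set is defined purely through the collection of cycles it must intersect, so $G$ and $\overline{G}$ share exactly the same family $\mathcal{S}$ of cycle hitting sets. Taking the minimum cardinality over this common family gives $|H(G)|=|H(\overline{G})|$, which in particular implies the desired $|H(G)|\leq |H(\overline{G})|$.

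There is no real obstacle in this argument; the only point that needs care is that arc deletion cannot create any new directed cycle, which is why the reverse set inclusion holds and the proof actually yields equality rather than just the claimed inequality. It is worth remarking that this slack is consistent with the subsequent use of marginal arcs in the heuristic as a conservative, easily-detectable class of relocatable arcs whose modification is guaranteed never to hurt survivability.
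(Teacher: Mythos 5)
Your argument is correct and follows essentially the same route as the paper's proof: both rely on the defining property of marginal arcs to conclude that deletion leaves the cycle set $\mathcal{C}(G)$ unchanged, hence the minimum cycle hitting set (and so the survivability) is preserved, giving equality and a fortiori the stated inequality. Your version merely spells out the two set inclusions explicitly, which the paper leaves implicit.
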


\begin{proof}[Proof]
Let $M$ be a set of marginal arcs. From the definition of MAs (Eq. \eqref{MAdef}), the removal of MAs does not destroy or connect any existing cycles in $G = (V, A)$. Therefore, $|H(G)| = |H(\overline{G})|$, where $\overline{G} = (V, A \setminus M)$.
\end{proof}

Moreover, appropriate relocations of the removed MAs could improve the survivability of interdependent  networks, assuring operability during the relocation process and maintaining the provisioning capability of each node. Let us analyze the effect of dependency relocations using simple examples in Figs. \ref{original2}-\ref{modified2_2}. The given graph $G$ in Fig. \ref{original2} has two marginal arcs: $M = \{(v_2, v_3), (v_5, v_3)\}$. In order to maintain at least one supporting node for $v_3$, one of the MAs has to remain the same, and the other can be relocated. Fig. \ref{modified2_1} shows the case of relocating $(v_5, v_3)$ to $(v_5, v_1)$; on the other hand, Fig. \ref{modified2_2} indicates the case of relocation of $(v_2, v_3)$ to $(v_2, v_4)$. Even though one new cycle ($C_3$ and $C'_3$ respectively) is formed by each relocation, the modified graphs $G'$ and $G''$ have different survivability: $|H(G')| = 1\ (= H(G))$, and $|H(G'')| = 2$. This is because the cycles in $G'$ are not disjoint with each other: $V(C_1) \cap V(C_2) \cap V(C'_3) \neq \emptyset$; in contrast, $V(C_1) \cap V(C_2) \cap V(C''_3) = \emptyset$ in $G''$. Therefore, it could be said that the appropriate relocation for improving survivability is to form disjoint cycles.

\begin{algorithm}[t]
\caption{$\Delta H$-algorithm$(G, l)$}
\label{MAimprove}
\begin{algorithmic}[1]
\REQUIRE subgraph (directed graph) $G = (V, A)$, maximum hop $l \in \mathbb{N}$ (odd)
\STATE $M \gets$ find-MAs($G$)  \hspace{2em} \# $M \subset A$
\FOR{each $(v, w) \in M$}
	\IF{$\textrm{deg}_{\textrm{in}}(w) \geq 1$ after $A \setminus \{(v, w)\}$}
		\WHILE{True}
			\STATE pick $C \in \mathcal{C}(v)$ (randomly)
			\FOR{$i \gets l$; $i > 0$; $i \gets i - 2$}
				\STATE pick $u \in V(C) : \overline{d_{C}}(v, u) = i$
				\IF{$u \notin U$}
					\STATE $A \gets A \setminus (v, w) \cup (v, u)$
					\STATE $U \gets U \cup \{n \mid \overline{d_{C}}(v, n) \leq i\}$
					\STATE break to next arc in $M$ (line 2)
				\ENDIF
			\ENDFOR
		\ENDWHILE
		\STATE pick $(u, v) \in \inarcs(v)$ (randomly) \# Minimal-add process (line 15,16)
		\STATE $A \gets A \setminus (v, w) \cup (v, u)$ \hspace{2em} 
	\ENDIF
\ENDFOR
\end{algorithmic}
\end{algorithm}

\subsection{Find-MAs Algorithm \label{findMAdesc}}
The Find-MAs algorithm first distinguishes MAs $M$, which are candidate arcs for relocations, from the arcs in directed cycles in a given graph $G = (V, A)$, by employing Johnson's algorithm \cite{JognsonCycles}. Johnson's algorithm enumerates all elementary cycles in a directed graph within $O((|V| + |E|)(|\mathcal{C}(G)| + 1))$. It is enough for distinguishing MAs to obtain elementary directed cycles because any non-elementary cycle can be divided into multiple elementary cycles within which dependency relationship are closed. After the enumeration of cycles in $G$ by Johnson's algorithm, the set of MAs is obtained by $M \gets A \setminus \bigcup_{C \in \mathcal{C}(G)} A(C)$.

\subsection{$\Delta H$ Algorithm \label{deltasec}}
With the set of MAs obtained by Johnson's algorithm, the $\Delta H$ algorithm (shown as pseudo code in Algorithm \ref{MAimprove}) relocates the destinations of MAs, considering disjointness of newly created cycles. (See the discussion in Section \ref{restructure_benefit}.) For each MA $(v, w)$, our algorithm first checks whether or not the relocation of this MA causes the loss of supports for the current destination $w$: $\indeg(w)_{\overline{G} = (V, A \setminus \{(v, w)\})} \geq 1$ (line 3).

If $w$ still has some supporting node after the removal of $(v, w)$, the next step is determining a new destination for $(v, \, \cdot\, )$. Our algorithm randomly selects one of the cycles that contains the source $v$ denoted by $C \in \mathcal{C}(v)$ (line 5). There \textcolor{black}{may be multiple} possible candidate nodes for a new destination in the cycle $C$. Thus, the new destination is decided by the size of the newly formed cycle, which is a result of the relocation (line 6, 7). To represent the size of the newly formed cycle, the distance from a node $v$ to a node $u$ in an (existing) cycle $C$ in the counter direction is denoted as $\overline{d_C}(v, u)$ in our pseudo code. When the maximum hop is designated by $l$, the algorithm tries to make a new cycle with size $l + 1$ using a node $u$, such that $\overline{d_C}(v, u) = l$, as the destination of the MA. If it fails to form the cycle, \textcolor{black}{it attempts to compose a smaller cycle using a node $u'$ such that $\overline{d_C}(v, u') = l - 2$. Because of the definition of the dependency, an arc must span between two different layers or constituent networks. Since the node at $\overline{d_C}(v, u) = l - 1$ in $C$ is in the same constituent network as the source node $v$, it cannot be a new destination}.

\textcolor{black}{Consider} an example using a given graph $G$ shown in Fig. \ref{original1} and the restructured graph in Fig. \ref{modified1}. Since the removal of $(v_1, v_9)$ does not make $v_9$ lose all its incoming dependency arcs, our algorithm tries to relocate the destination of this arc to one of the nodes in the cycle $C_1$, which are $v_2, v_6, v_8$. \textcolor{black}{For instance, in the case $l = 3$, a new cycle $C_3$ is formed as depicted in Fig. \ref{modified1} by choosing $v_6$, that satisfies $\overline{d_{C_1}}(v_1, v_6) = l\, (= 3)$. Similarly, if $l$ is initialized to $1$, a new cycle $C_3$ is formed using $\{v_1, v_8\}$.}

\def\scale{0.7}
\begin{figure}[t]
\def\TCOMscale{0.4}
\centering
\begin{minipage}[b]{\columnwidth}
\centering
\includegraphics[width=0.6\columnwidth]{./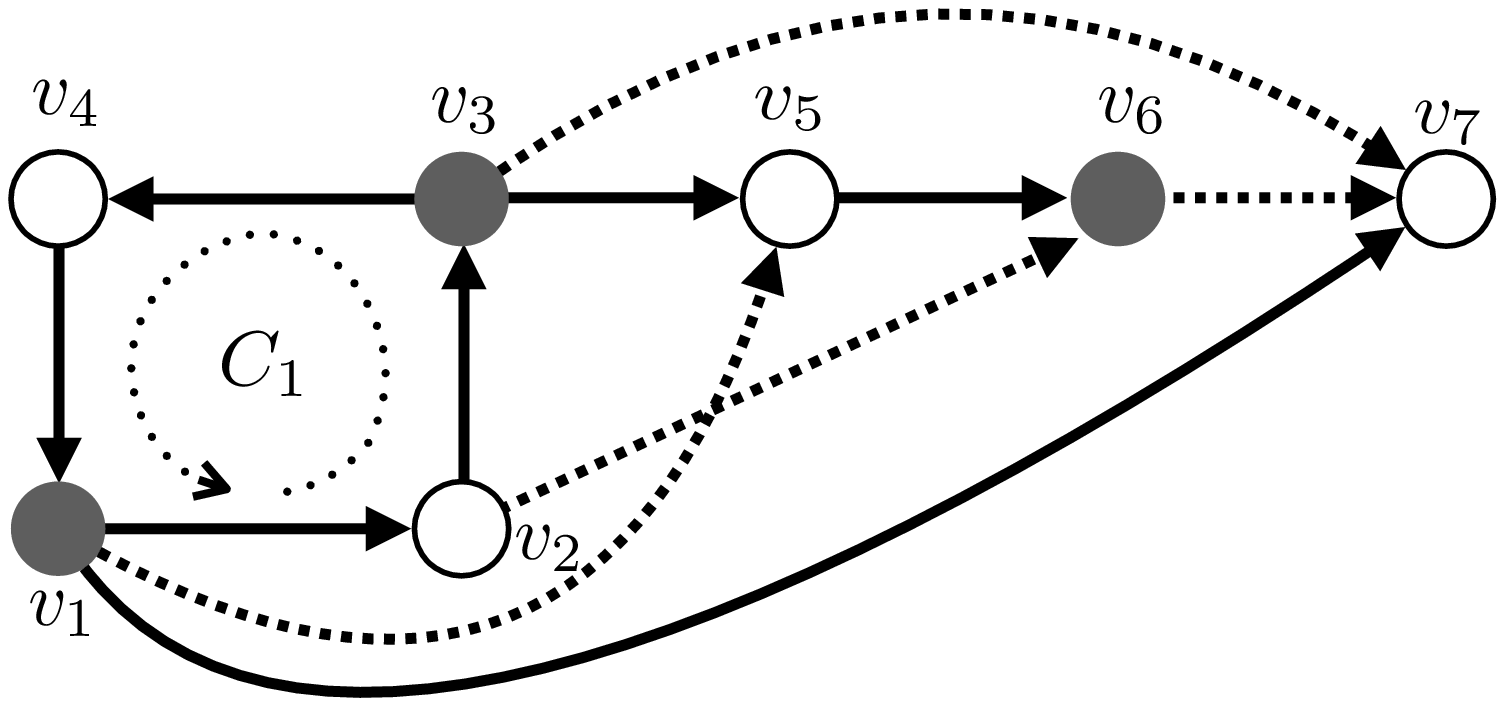}
\caption{A given graph $G$ with $M = \{(v_1, v_5),$$\, (v_2, v_6)$, $(v_3, v_7)$, $(v_6, v_7),$ $(v_1, v_7), (v_3, v_5), (v_5, v_6)\}$. \label{minaddex1}}
\end{minipage}\\ \vspace{1em}
\begin{minipage}[b]{\columnwidth}
\centering
\includegraphics[width=0.65\columnwidth]{./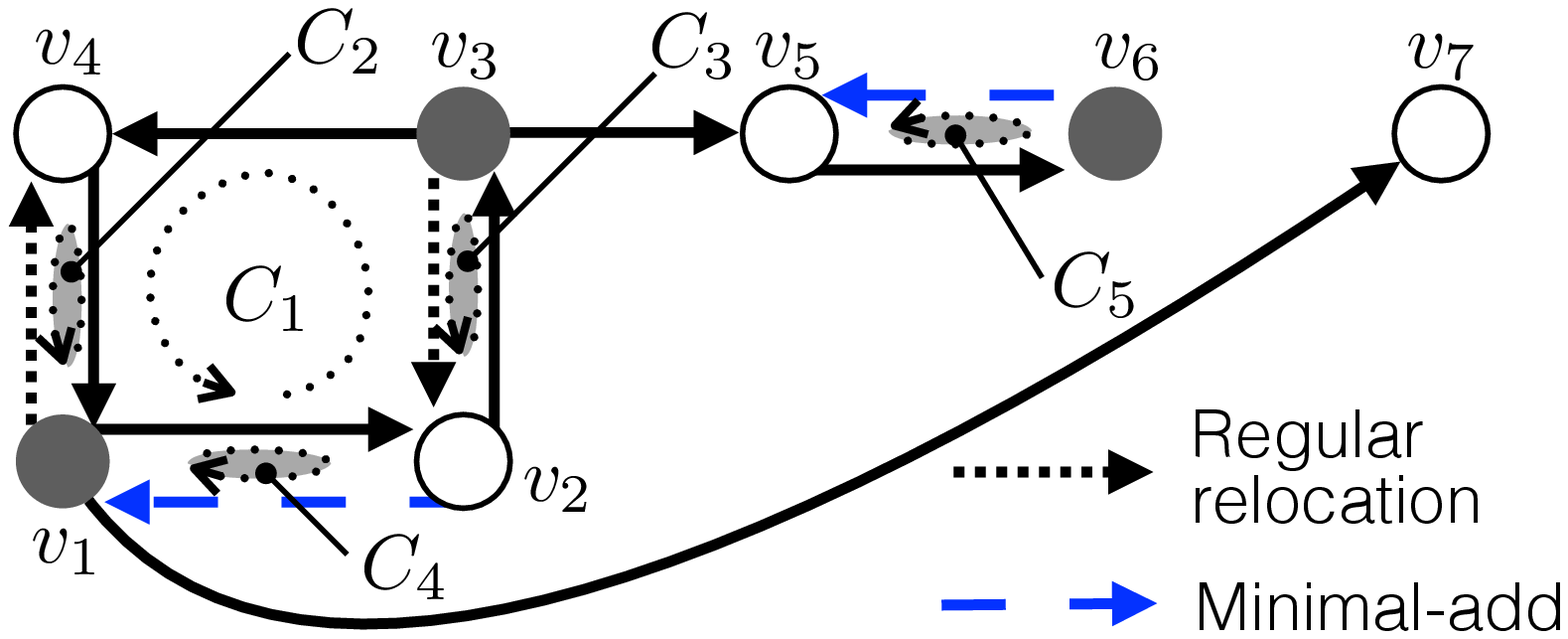}
\caption{A modified graph $G'$ with new arcs: $(v_1, v_4),$$\, (v_3, v_2),$$\, (v_2, v_1),$ $(v_6, v_5)$. \label{minaddex2}}
\end{minipage}
\end{figure}

After selecting a destination candidate $u$ in line 7, our algorithm checks if $u$ is already used to create a new cycle (line 8). This is confirmed by a set of nodes $U$ storing all the nodes that are in newly formed cycles: $\{ n \mid \overline{d_C}(v, n) \leq i\}$ (line 10). For instance in Fig. \ref{modified1}, $U \gets U \cup \{v_1, v_6, v_7, v_8\}$. As will be understood, when another MA tries to form a new cycle using one of these nodes in $U$, the new cycle and $C_3$ share some nodes, which means that those cycles are not disjoint. Also, the arc set $A$ is updated when the new destination is finally fixed (line 9). 

If there exists no possible destination for an MA $(v, w)$ that satisfies all the conditions, the relocation of the MA is conducted by randomly selecting an incoming arc of $v$, $(u, v)$ and relocating $(v, w)$ to $(v, u)$, so that it composes a cycle of length 2 (line 15, 16). This random selection is named \textit{Minimal-add} process.

The MAs relocated by the Minimal-add process satisfy either of the following cases: 1) The node $v$ does not belong to any cycles: $\mathcal{C}(v) = \emptyset$, or 2) all the nodes in the cycles of $\mathcal{C}(v)$ are already used to compose new cycles by other MAs. Figs. \ref{minaddex1} and \ref{minaddex2} show examples of these two conditions (dashed arcs). A given graph $G$ has the MA set $M = \{(v_1, v_5),$$\, (v_2, v_6),$$\, (v_3, v_7),$$\, (v_6, v_7),$ $(v_1, v_7),$$\, (v_3, v_5), (v_5, v_6)\}$. Eventually, the $\Delta H$ algorithm respectively relocates $(v_1, v_5)$ and $(v_3, v_7)$ to $(v_1, v_4)$ and $(v_3, v_2)$. Because $v_6$ is not in any cycles in $G$ (reason 1), the Minimal-add process picks the source of one of the current incoming arcs in $\inarcs(v)$, $v_5$ as the new destination. Also, $(v_2, v_6)$ does not have any possible destinations that are not in the set $U$ (reason 2), and it is relocated to $(v_2, v_1)$ by the Minimal-add process.

\subsection{Application to Clustered Networks \label{clustercase}}
Our heuristic algorithm employs another algorithm named \textit{Decompose-cluster} to form subgraphs, which indicate candidate destinations for the MAs in each cluster, from a given interdependent network. When interdependent networks are clustered, the modification of the destinations of MAs needs to be conducted under more constraints given by supportability functions $\sigma_{ij}$: $\kappa_j(v \in V_j) \in \sigma_{ij}(u \in V_i)\ \forall (u, v) \in A$. The Decompose-cluster algorithm selects each cluster (node set $W^x_i$ $(1 \leq i \leq k,\ 1 \leq x \leq \gamma_i)$) and collects MAs $(u, v)$ whose sources are in the cluster ($u \in W_i^x$), or whose destinations and sources are respectively in the cluster $W^x_i$ and in a cluster in $\sigma_{ij}(v)$ ($v \in W_i^x$ \& $\kappa_j(u \in V_j) \in \sigma_{ij}(v))$). Using the collected MAs and their endpoints, a subgraph $Y$ for reallocations of MAs in $W^x_i$ is composed. Each subgraph for each cluster is given to the $\Delta H$-algorithm so that it can improve the survivability by restructuring dependencies in the subgraph.

As will be understood, no directed cycles exist if no MA matches the condition of $v \in W_i^x$ \& $\kappa_j(u \in V_j) \in \sigma_{ij}(v)$. However, this is not going to happen in our work due to the assumption mentioned in Section \ref{assumptions}. Note that the absence of such MAs means that nodes in a cluster $x$ are not provided any support by the nodes that receive some supports from the nodes in the cluster $x$.

\begin{algorithm}[t]
\begin{algorithmic}[1]
\REQUIRE interdependent network (directed graph) $G = (V = \bigcup_{i=1}^k V_i, A)$, clustering functions $\kappa_i$
\STATE $D \gets \emptyset$
\FOR{a node set $W^x_i$ $(1 \leq i \leq k,\ 1 \leq x \leq \gamma_i)$} 
	\STATE $P \gets \emptyset,\ R \gets \emptyset$
	\FOR{each $(u, v) \in A \setminus D$}
		\IF{$u \in W_i^x$ or ($v \in W_i^x$ \& $\kappa_j(u \in V_j) \in \sigma_{ij}(v))$} 
			\STATE $P \gets P \cup \{u, v\}$		
			\STATE $R \gets R \cup (u, v)$
			\STATE $D \gets D \cup (u, v)$			
		\ENDIF
	\ENDFOR
	\STATE compose graph $Y = (P, R)$
	\STATE $\Delta H$-algorithm($Y, l$)
\ENDFOR
\end{algorithmic}
\caption{Decompose-cluster($G$)}
\label{clusterdecompose}
\end{algorithm}

\subsection{Complexity Analysis}
The Decompose-cluster algorithm extracts $\sum_{i=1}^k \gamma_i$ subgraphs from a given graph $G = (V, A)$. The number of clusters $\gamma_i$ in each constituent graph tends to be much smaller than the number of nodes; thus, $\sum_{i=1}^k \gamma_i$ can be considered as a constant. In order to compose each subgraph, the algorithm requires to check the source and destination of each arc in $A$. However, each edge appears in exactly one subgraph because of the used edge set $D$. Therefore, the total complexity of the Decompose-cluster algorithm is $O(|V| + |A|)$.

The complexity of the $\Delta H$-algorithm is sensitive to the number of cycles in the interdependent network. It is known that Johnson's algorithm finds all elementary cycles within $O((|V| + |E|)(|\mathcal{C}(G)| + 1))$. The $\Delta H$-algorithm determines a new destination after $\frac{l}{2} \times \mathcal{C}(G)$ searches for each MA, in the worst case. When only one cycle whose size is 2 exists in the input, and the other nodes are supported by the cycle, the size of the set $M$ becomes $|E| - 2$. It is obvious that the complexity of the Minimal-add process is $O(1)$, so the worst case analysis takes the case where all MAs are reallocated by the $\Delta H$-algorithm. Thus, its complexity is $O((|V| + |E|)(|\mathcal{C}(G)| + 1)) + O((|E| - 2) (\lceil{\frac{l}{2}} \rceil \times |\mathcal{C}(G)|))$. Assuming the maximum hop $l$ is small enough to be considered as a constant, the overall complexity of our heuristic algorithm becomes $O((|V| + |E|) |\mathcal{C}(G)|)$. Note that the assumption on $l$ is valid with our strategy, which tries to increase disjoint directed cycles in a given graph.

\subsection{Optimality in Special Graphs}
In order to analyze the performance of our heuristic algorithm, we consider the survivability improvement in special graphs where either an exhaustive search gives us the optimum survivability, or some special properties allow us to compute the optimum. 

In the analysis, the upper bound of the survivability improvement, which is used as a benchmark for the rest of this paper, is calculated based on the number of the MAs that satisfy the following two conditions. First, let $V_s$ be a set of nodes that hold more than one MA, and $M_s$ be a set of MAs whose source nodes are in $V_s$. Even when the MAs from $v \in V_s$ form more than one new cycles, the removal of such a source node $v$ can destroy all the newly formed cycles. This indicates that restructuring increases the survivability by at most $|V_s|$, when relocating MAs in $M_s$. Second, let $V_d$ be a set of nodes whose incoming arcs are all MAs, and $M_d$ be a set of MAs whose destination nodes are in $V_d$. If all the MAs incident to $v \in V_d$ are relocated, $v$ loses its functionality during this restructuring. Therefore, at least one MA should remain as an incoming arc to $v$. This implies that the number of cycles newly formed by the MAs in $M_d$ is at most $|M_d| - |V_d|$. Thus, the upper bound $U$ is obtained by $|M| - |M_s| + |V_s| - |V_d|$.

Fig. \ref{optresult} illustrates a comparison of our algorithm with the optimum solution in a small interdependent network such that each constituent graph has 15 nodes, and the number of dependency arcs is 84, including 5 MAs. The optimum solution is obtained by an exhaustive search of 759,375 combinations of reallocations. This numerical example shows that the solution given by the $\Delta H$ algorithm would not provide solutions that are exceptionally divergent from the optimum solution. It also infers that the upper bound is not tight in general.

\begin{figure}[t]
\centering
\includegraphics[width=.31\textwidth]{./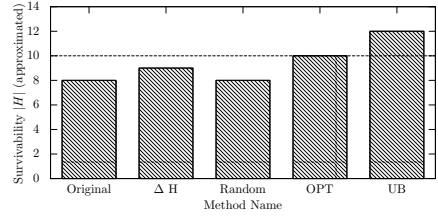}
\caption{Numerical comparison with the optimum solution in a small interdependent network.\label{optresult}}
\end{figure}
\begin{figure}[t]
\centering
\includegraphics[width=.31\textwidth]{./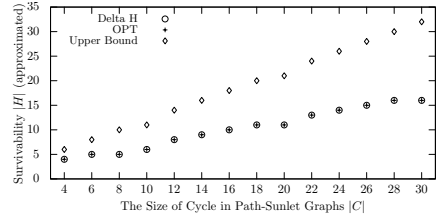}
\caption{Survivability of MA-saturated Path-Sunlet graphs $\zeta_2(G \in \mathcal{L})$ with two length-3 paths: $|\mathcal{P}| = 2$, $k_i = 3\ (\forall P_i \in \mathcal{P})$. \label{pathsunlet}}
\end{figure}

Fig. \ref{pathsunlet} indicates that the survivability obtained by our restructuring heuristic algorithm matches the optimum in a special class of graphs, which are named MA-saturated Path-Sunlet Graphs $\zeta_2(G),\ G \in \mathcal{S}$. The optimum value of survivability for these graphs is always computable based on the following discussion.

\begin{definition} \textcolor{black}{Path-Sunlet Graphs $\mathcal{L}$: 
A set of graphs satisfying the following conditions are named Path-Sunlet graphs. Let $\mathcal{L}$ denote the set of Path-Sunlet graphs.}
\textcolor{black}{\begin{itemize}
\item $G \in \mathcal{L}$ only has one cycle $C$.
\item The arcs that are not in the cycle $C$ form a set of disjoint paths whose initial nodes are in $C$: $\mathcal{P} = \{P_i = (v^i_1, v^i_2, ..., v^i_{k_i}) \mid v^i_1 \in C \textrm{ and } P_i \cap P_j = \emptyset\ (\forall P_{j \neq i} \in \mathcal{P}) \}$.
\end{itemize}}
\end{definition}

\begin{definition} \textcolor{black}{MA-saturation $\zeta_\delta(G)$ of a graph $G$:
The MA-saturation is an operation of adding additional arcs to a given graph until any addition of an arc makes the graph non-simple, maintaining the out-degree constraint that the out-degree of any node does not exceed a given constant $\delta \in \mathbb{N}$.}
\end{definition}

\begin{remark*}
\textcolor{black}{The optimal restructuring of MAs in MA-saturated Path-Sunlet graphs $\zeta_2(G),\ G \in \mathcal{L}$ consists of forming length-2 cycles using an MA and an edge in either $P_i \in \mathcal{P}$ or $C$.}
\end{remark*} 

\textcolor{black}{We consider the cases where $|\mathcal{P}| \geq 1$, because the survivability in the case of $|\mathcal{P}| = 0$ is obviously $\left\lceil \frac{|V(C)|}{2} \right\rceil$.}

\begin{lemma}\textcolor{black}{
By removing arcs that are not in any cycle, the optimally restructured MA-saturated Path-Sunlet graph $\zeta_2(G)$ is decomposed into some sequence of cycles.}
\end{lemma}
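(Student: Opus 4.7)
The plan is to exploit the structural Remark just stated: the optimal restructuring replaces each relocated MA with an arc that, together with an already present path or cycle edge, forms a length-$2$ cycle. My approach would be to classify every arc of the optimally restructured $\zeta_2(G)$ as either a participant in such a length-$2$ cycle, an arc of the original cycle $C$ that was left untouched, or an arc that was not chosen for pairing. Only the first two classes lie on cycles; the third class is precisely what is eliminated in the hypothesis of the lemma. So the real content is to show that what survives is actually a disjoint union of (possibly node-sharing) cycles that can be listed as a sequence.

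First I would use the out-degree constraint $\delta=2$ from the definition of $\zeta_2(G)$ to argue that each node contributes at most one outgoing arc to a formed length-$2$ cycle, so the length-$2$ cycles produced by the optimal restructuring are pairwise arc-disjoint. Next I would walk along each path $P_i=(v^i_1,\dots,v^i_{k_i})$ from its far end back toward $C$: by the Remark, consecutive arcs on $P_i$ get paired greedily with relocated MAs to form length-$2$ cycles on $\{v^i_{k_i-1},v^i_{k_i}\}, \{v^i_{k_i-3},v^i_{k_i-2}\}, \ldots$, and a possible leftover arc at the boundary with $C$ is either absorbed into a length-$2$ cycle using an MA from $C$ or else removed as a non-cycle arc. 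Then I would handle $C$ itself: either it is preserved as a single long cycle in the restructured graph, or it is decomposed into length-$2$ cycles formed from alternate arcs of $C$ and appropriately relocated MAs.

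Having identified the two surviving families (the path-based length-$2$ cycles and the $C$-derived cycles), I would read off the desired sequence by concatenating, for each path $P_i$, its induced length-$2$ cycles in order from $v^i_1$ outward, and then appending the cycles derived from $C$ in $C$'s cyclic order. Because each removed arc is not in any cycle of the restructured graph, the removal does not break any of these cycles, so the listed sequence exhausts the remaining arcs.

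The main obstacle will be the boundary between each path $P_i$ and the base cycle $C$ at the attachment node $v^i_1$. Depending on the parity of $k_i$ and on whether $C$ is preserved or further decomposed, $v^i_1$ may be shared between a $P_i$-based length-$2$ cycle and a $C$-based cycle, which is allowed in a \emph{sequence} of cycles but must be checked to be consistent with the out-degree-$2$ constraint and with the optimality claim in the Remark. I would also have to rule out the accidental creation of a longer cycle by the interplay of MA-saturation arcs and the preserved portion of $C$, which follows from the fact that, outside the chosen length-$2$ pairings, every remaining arc lands on a node whose only other cycle-incidence has already been accounted for.
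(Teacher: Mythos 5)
Your proposal reaches the right conclusion, but by a much longer and more fragile route than the paper. The paper's entire proof is a two-sentence degree-counting argument: since every node has out-degree at most $\delta = 2$, and each cycle through a node consumes one of its outgoing arcs, no node can lie on three or more cycles; hence the only way multiple cycles can be arranged is as a chain in which consecutive cycles share exactly one node, i.e.\ a sequence of cycles. That is all that is needed --- no walk along the paths $P_i$, no parity analysis, no case split on whether $C$ is preserved or decomposed, and no treatment of the attachment nodes $v^i_1$. You do invoke $\delta = 2$, but only to conclude that the length-$2$ cycles are pairwise arc-disjoint; the stronger and more directly useful consequence is the bound of two on the number of cycles meeting at any single node, which is what forces the ``sequence'' structure immediately. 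Your constructive enumeration also leans heavily on the (unproven) Remark describing the exact form of the optimal restructuring, whereas the degree argument applies to any restructuring that respects the out-degree constraint, so it is both shorter and more robust. The boundary and parity issues you flag as ``the main obstacle'' simply evaporate under the paper's argument: whatever happens at $v^i_1$, at most two cycles can meet there, so the chain structure is automatic. Your approach is salvageable, but you would need to carry out all of the case analysis you only sketch, and you would still be proving a special case of what the one-line degree bound gives for free.
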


\begin{proof}\textcolor{black}{
Three or more cycles do not meet at the same node, since $\delta = 2$. Therefore, the only possible topology with multiple length-2 cycles is a chain of cycles, in which two cycles share exactly one node.}
\end{proof}

\begin{lemma}\textcolor{black}{
The survivability of the optimally restructured MA-saturated Path-Sunlet graphs $\zeta_2(G),\ G \in \mathcal{L}$ is $\sum_{q \in Q} \left\lceil \frac{q}{2} \right\rceil$, where $Q$ is the set of all the sequences of cycles obtained by removing the arcs that are not in any cycles.}
\end{lemma}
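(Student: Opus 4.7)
The plan is to leverage the preceding lemma, which states that the optimally restructured $\zeta_2(G)$, after deleting arcs that lie in no cycle, decomposes into a collection $Q$ of chains of length-$2$ cycles. For each chain $q \in Q$ (say containing $n_q$ cycles), I would compute the minimum cycle hitting set independently and then aggregate the results across $Q$.

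First, I would argue that the chains in $Q$ are pairwise node-disjoint, so that minimum hitting sets can be computed chain by chain. Suppose toward a contradiction that some vertex $v$ belongs to two distinct chains; then $v$ is incident to at least three length-$2$ cycles (two interior cycles of one chain together with at least one cycle of the other). Each length-$2$ cycle at $v$ uses one outgoing arc from $v$, which would force $\outdeg(v) \geq 3$ and violate the saturation constraint $\delta = 2$. The case analysis would also cover the special situation where two chains might attempt to meet at a node of the original central cycle $C$ of the Path-Sunlet graph.

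Next, for a single chain with length-$2$ cycles $C_1, C_2, \ldots, C_{n_q}$ where consecutive $C_i, C_{i+1}$ share exactly one node $s_i$, I would show $|H(q)| = \lceil n_q / 2 \rceil$. For the upper bound, the set $\{s_1, s_3, s_5, \ldots\}$ hits two cycles per vertex (as $s_{2k-1} \in V(C_{2k-1}) \cap V(C_{2k})$), supplemented by any vertex of $C_{n_q}$ when $n_q$ is odd; this produces a hitting set of size $\lceil n_q/2 \rceil$. For the lower bound, every vertex of the chain belongs to at most two cycles (only the shared nodes belong to two, the remaining endpoints to exactly one), so by double counting any hitting set must have cardinality at least $\lceil n_q/2 \rceil$. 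Finally, because chains are node-disjoint and each cycle of the restructured graph belongs to exactly one chain, the overall minimum cycle hitting set is the disjoint union of the per-chain minima, giving $|H| = \sum_{q \in Q} \lceil n_q / 2 \rceil$, as claimed.

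The main obstacle I anticipate is the node-disjointness of chains under $\delta = 2$: the preceding lemma only says that multiple length-$2$ cycles arrange themselves into chains, and a careful argument is needed to rule out configurations in which two chains share a root inside the original cycle $C$ of the Path-Sunlet graph. The lower-bound estimate for a single chain is comparatively routine (essentially a minimum vertex cover on a path-structured intersection graph), but care must be taken at the chain's endpoints, which participate in only one cycle and therefore cannot amortize across two cycles like the shared nodes do.
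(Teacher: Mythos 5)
Your proof is correct and follows essentially the same route as the paper: decompose the restructured graph into node-disjoint chains of cycles, show each chain of $q$ cycles has minimum hitting set of size $\left\lceil q/2 \right\rceil$, and sum over chains. You are in fact more careful than the paper, which only exhibits a hitting set of that size without your double-counting lower bound; the one small slip is in your disjointness argument (a node shared by two chains need not lie in three cycles --- two suffice, and then those two cycles sharing a node would, by the maximality of chains in the preceding lemma's decomposition, already belong to the same chain), but the conclusion stands.
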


\begin{proof}\textcolor{black}{
A removal of one node that is shared by two cycles breaks the two cycles. When $q$ is even, the process gives us the survivability of $\frac{q}{2}$. If $q$ is odd, one additional removal is needed to destroy the remaining cycle. Thus, the survivability of a sequence of $q$ cycles is $\left\lceil \frac{q}{2} \right\rceil$.}

\textcolor{black}{Since each sequence in $Q$ is disjoint with the other, the survivability of the entire graph is obtained by summing up the survivability of each sequence.}
\end{proof}

\section{Simulation}
In order to understand the performance of the proposed algorithm, our simulations are conducted in both non-clustered and clustered interdependent network models of different sizes. The results from the simplest cases where each constituent network only consists of one cluster (non-clustered) are first described, and the clustered cases follow.

\subsection{Network Topology}
The performance of the proposed algorithm is analyzed in random directed bipartite graphs that contain at least one directed cycle. Assuming the situation in which a current interdependent network is working normally, each node is either a member of some cycle or reachable from a node in a cycle through some directed path in the input graph. Because our algorithm only concerns the dependency arcs between 2 constituent graphs ($k = 2$), any interdependent network is represented as a directed bipartite graph whose arcs connect a pair of different types of nodes.

Each random bipartite graph is generated by specifying the following parameter: $V_i$, $\max_{v \in V} \indeg(v)$ and $\min_{v \in V} \indeg(v)$. In order to observe the performance in different conditions, experiments are conducted in symmetric and asymmetric interdependent networks. A symmetric interdependent network has constituent networks which each have identical number of nodes: $|V_1| = |V_2|$, while constituent networks of an asymmetric interdependent network have different number of nodes: $|V_1| = \frac{|V_2|}{q}\ (q \in \mathbb{N})$. The degree of each node is determined based on the uniform distribution between the given maximum and minimum incoming degree.

\subsection{Clustering Settings}
As the non-clustered cases have symmetric and asymmetric constituent graphs, clustered interdependent networks are also examined in three patterns of topology configurations. In our simulations, each constituent graph has three clusters: $W_i^1$, $W_i^2$ and $W_i^3$ $(i = 1, 2)$ (See Fig. \ref{cluster_example}). In symmetric cases, a pair of corresponding clusters in different constituent graphs have the same number of nodes: $W_1^x = W_2^x$,  while a cluster is half-sized to the corresponding cluster in the other constituent graph in asymmetric models: $W_1^x = \frac{W_2^x}{2}$.  

Also, Fig. \ref{cluster_example} illustrates the three models that have different dependency relationships indicated as arrows. Note that when an arrow is drawn from $W_i^x$ to $W_j^{x'}$, it means that the nodes in cluster $W_j^{x'}$ can have supports from the nodes in $W_i^x$. Model 1 consists only of the solid arrows, which means that each pair of corresponding clusters has dependency relationships. Model 2 has the dependencies illustrated by the solid and dashed arrows, while Model 3 has all the arrows (solid, dashed and dotted). A major difference between these models is the possibility for a network to have some directed cycles over three or more clusters. In Model 1 and 2, directed cycles are able to exist only in a subgraph consisting of $W_1^1$ and $W_2^1$, $W_1^2$ and $W_2^2$, or $W_1^3$ and $W_2^3$, while a directed cycle can lie over the entire graph containing all the clusters in Model 3.

\def\scale{0.9}
\begin{figure}[t]
\centering
\includegraphics[width=0.8\columnwidth]{./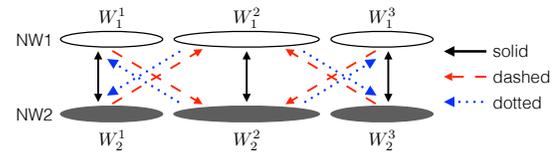}
\caption{Dependency models of clustered interdependent networks. Arrows show the dependency relationships between clusters. Model 1: solid. Model 2: solid and dashed. Model 3: solid, dashed, and dotted.}
\label{cluster_example}
\end{figure}

\begin{figure*}[t]
\centering
\begin{minipage}[t]{0.31\textwidth}
\centering
\includegraphics[width=1\columnwidth]{./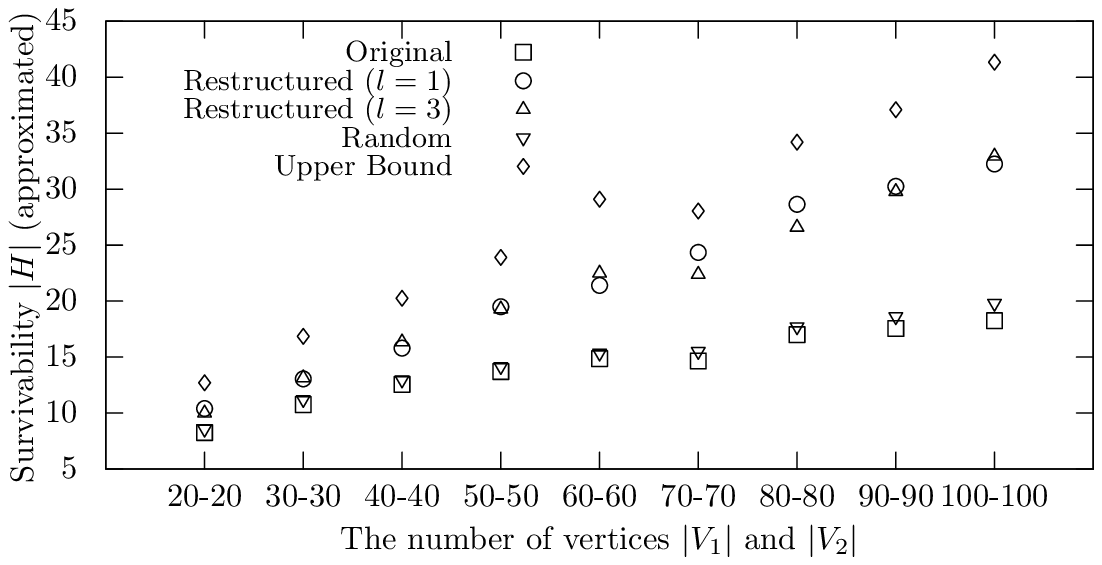}
\caption{Survivability of interdependent networks before and after the improvement under $|V_1| = |V_2|$, $\max_{v \in V} \indeg(v) = 4$, and $\min_{v \in V} \indeg(v) = 2$, and $l = 1, 3$.
\label{symmetric_result}}
\end{minipage}\hfill
\begin{minipage}[t]{0.31\textwidth}
\centering
\includegraphics[width=1\columnwidth]{./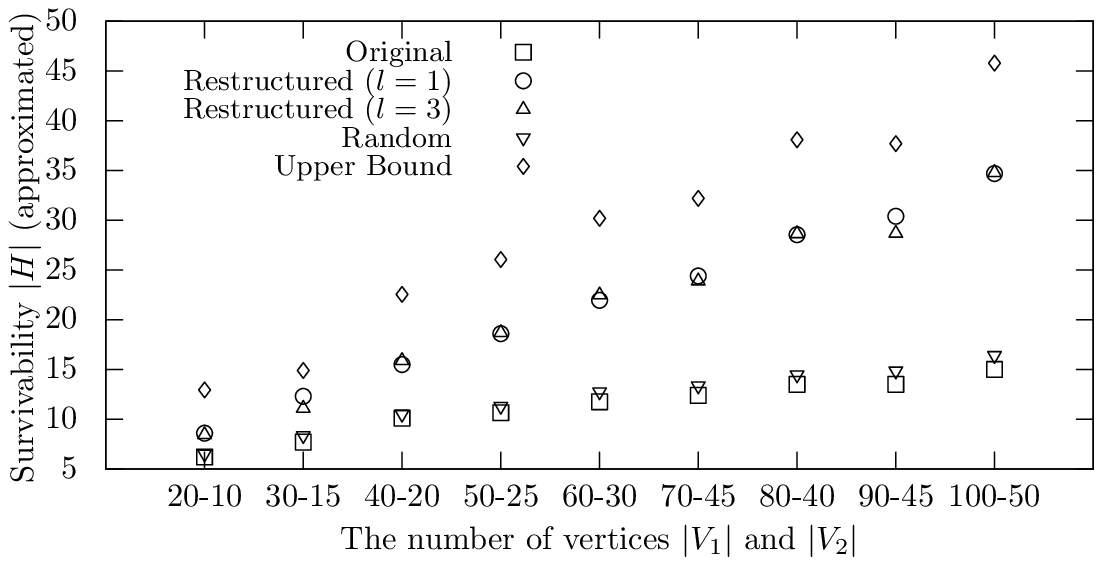}
\caption{Survivability of interdependent networks before and after the improvement under $|V_1| = \frac{|V_2|}{2}$, $\max_{v \in V} \indeg(v) = 4$, $\min_{v \in V} \indeg(v) = 2$, and $l = 1, 3$.
\label{asymmetric_result}}
\end{minipage}\hfill
\begin{minipage}[t]{0.31\textwidth}
\centering
\includegraphics[width=1\columnwidth]{./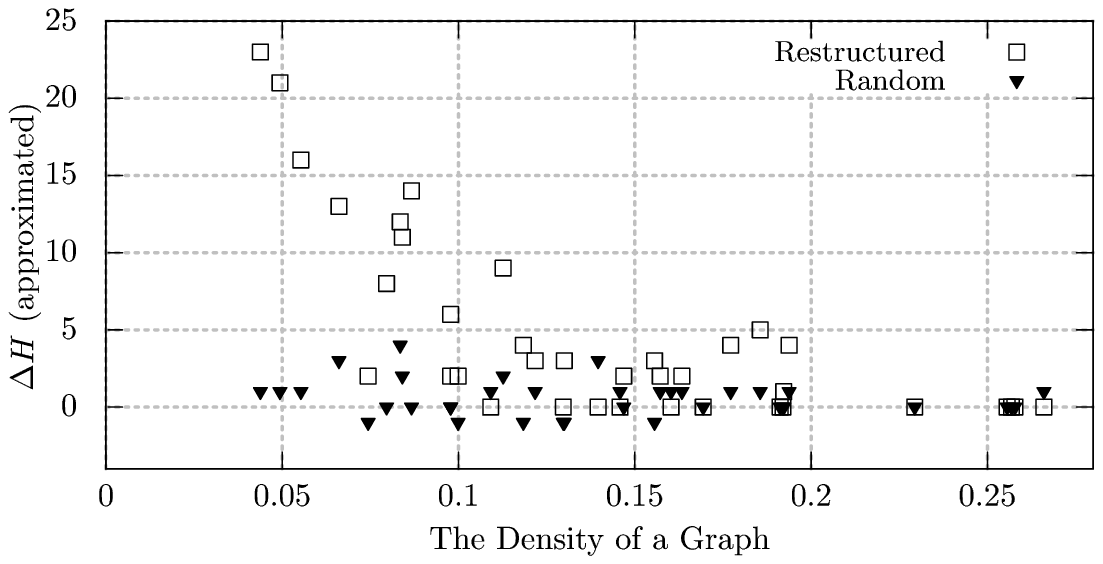}
\caption{The relationship between graph density and $\Delta H$.\label{density_result}}
\end{minipage}
\end{figure*}
\begin{figure*}[t]
\centering
\begin{minipage}[t]{0.31\textwidth}
\centering
\includegraphics[width=1\columnwidth]{./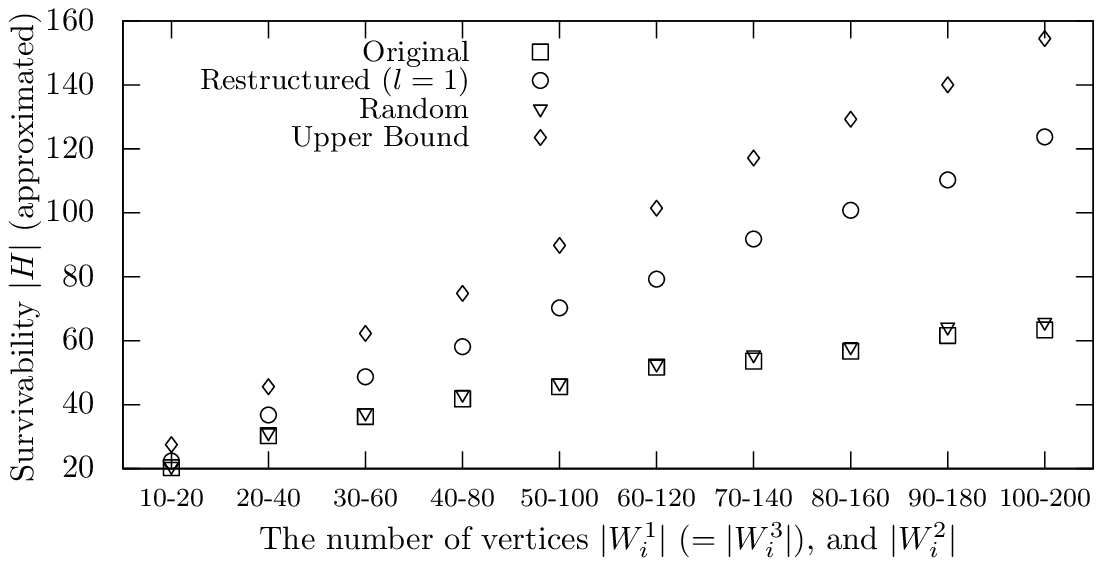}
\caption{Survivability of clustered interdependent networks (Model 2) before/after the improvement under $|W_1^1|  = |W_1^3| = |W_2^1| = |W_2^3|$, $|W_1^2| = |W_2^2|$, $\max_{v \in V} \indeg(v) = 4$, $\min_{v \in V} \indeg(v) = 2$, and $l = 1$.
\label{sym_clustered_result}}
\end{minipage}\hfill
\begin{minipage}[t]{0.31\textwidth}
\centering
\includegraphics[width=1\columnwidth]{./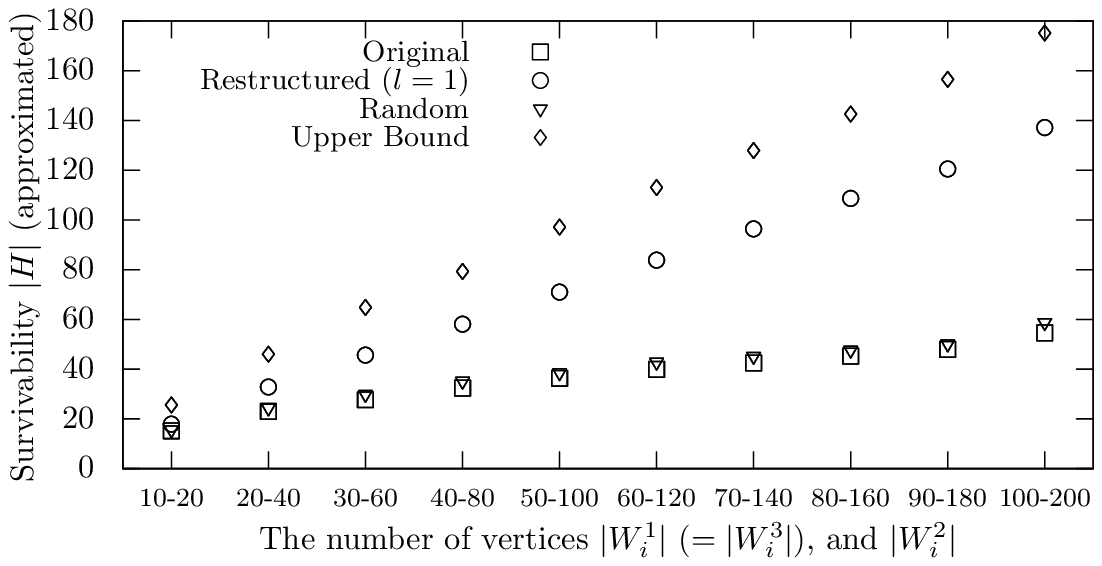}
\caption{Survivability of clustered interdependent networks (Model 2) before/after the improvement under $|W_1^1| = |W_1^3| = \frac{|W_2^1|}{2} = \frac{|W_2^3|}{2}$, $|W_1^2| = \frac{|W_2^2|}{2}$, $\max_{v \in V} \indeg(v) = 4$, $\min_{v \in V} \indeg(v) = 2$, and $l = 1$. \label{asym_clustered_result}}
\end{minipage}\hfill
\begin{minipage}[t]{0.31\textwidth}
\centering
\includegraphics[width=1\columnwidth]{./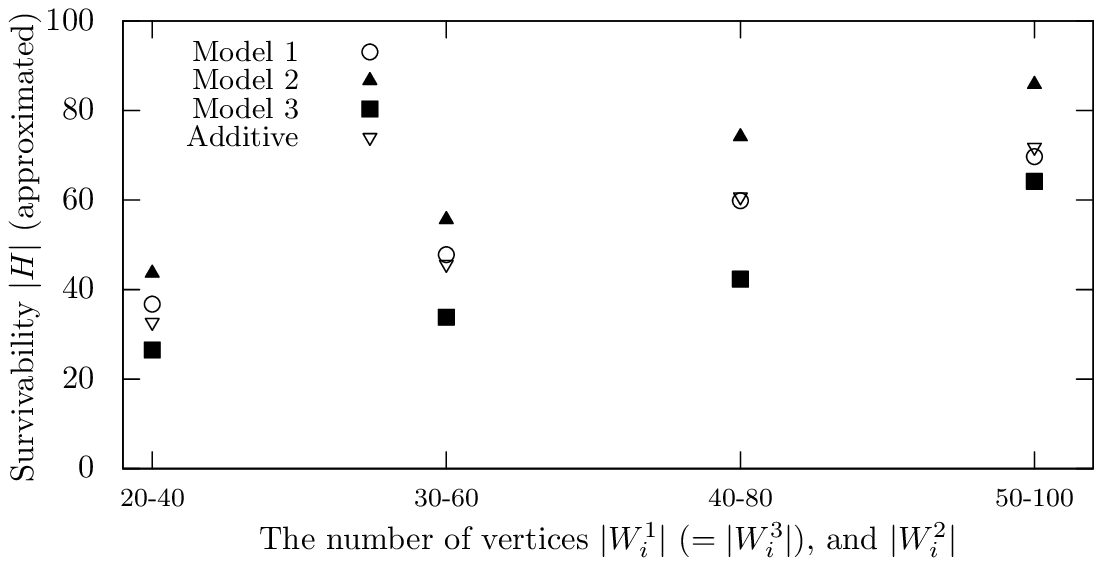}
\caption{Comparison of survivability among different dependency models.\label{additive}}
\end{minipage}
\end{figure*}

\subsection{Metrics}
The survivability of the given graphs, restructured graphs, randomly reassigned graphs, and the upper bound of the improvement are illustrated in our results. The random reassignments of MAs are conducted with a uniform distribution over all the nodes in the other constituent graph from the constituent graph that includes the source of an MA.

Computing the size of the cycle hitting set is known to be NP-complete even in bipartite graphs, so the exact value cannot be obtained in larger graphs. Our evaluation is conducted using a well-known approximation algorithm whose approximation factor is $\ln{|V|} + 1$ \cite{greedy_setcover}.

Furthermore, the density of a given graph $G = (V, A)$ defined by $\frac{|A|}{\prod_i |V_i|}$ is used to examine the relationship between the survivability improvement, and the maximum and minimum degrees.

\subsection{Results}
\subsubsection{Non-clustered Cases}
Figs. \ref{symmetric_result} and \ref{asymmetric_result} illustrate the survivability of the given and restructured graphs with identical and halved size constituent graphs, respectively. In both cases, our method demonstrates more improvement of the survivability compared to the random reassignment. The survivability of the original graphs $|H(G)|$ maintains a similar value regardless of the size of graphs, though the survivability of the graphs restructured by our method $|H(G')|$ steeply increases along with the size of the graph. Since, in the original graph $G$, arcs are randomly added, it could be difficult to form larger directed cycles. Therefore, it is reasonable that the number of disjoint cycles indicates the tendency to stay within a similar range of values. On the other hand, there would exist more MAs in larger graphs, because these graphs have more arcs that are not in directed cycles. This results in dramatic enhancement of the survivability in larger graphs. The difference caused by the given maximum hop $l$ for our algorithm remains small over all sizes of a graph.

Fig. \ref{density_result} indicates the relationship between the density of graphs and $\Delta H$, the amount of survivability improvement. We compare our method to the random reassignment. \textcolor{black}{The result shows that, in graphs with lower density, our method has greater success in increasing the survivability.} An observed general trend of our method is the gradual decrease in $\Delta H$ in accordance with the density. This trend seems to be induced by the fact that the graphs with more arcs have a higher possibility of composing cycles even in the original topology. This implies that graphs with higher density have fewer MAs that can form new disjoint cycles. On the other hand, the random reassignment does not demonstrate its effectiveness for the improvement in graphs with any density, which is the same result from Figs. \ref{symmetric_result} and \ref{asymmetric_result}. Moreover, the random reassignment sometimes decreases the survivability ($\Delta H < 0$). It is conceivable that the reassignment connects two (or more) cycles and make it possible to decompose all these cycles by the removal of a node. This result implies that imprudent restructuring of the dependency may cause more fragility of the interdependent networks.

\subsubsection{Clustered Cases}
The results in clustered interdependent networks whose dependency relationships follow Model 2 are shown in Figs. \ref{sym_clustered_result} and \ref{asym_clustered_result}. Similar trends to non-clustered cases are observed for both symmetric and asymmetric cases. The proposed method succeeds in increasing the survivability for different sizes of interdependent networks. 

Fig. \ref{additive} illustrates the difference in survivability after restructuring among the three types of dependency models of symmetric networks. The value of ``Additive" is obtained by the simple addition of non-clustered cases that jointly compose a clustered case. For instance, the case of clustered networks consisting of 20, 40, and 20 nodes clusters is compared with the sum of the survivability of the cases of non-clustered networks of 20, 40, and 20 nodes shown in Fig. \ref{symmetric_result}. The dependency relations among clusters increase from Model 1 to Model 3 (See Fig \ref{cluster_example}). 

Model 1 gives similar survivability to the simple addition of non-clustered cases, since a pair of corresponding clusters in two constituent graphs is independent from the other pairs in this model. In Model 2, the survivability of the entire network increases, because the nodes in cluster $W_i^2$ can have more supports from the clusters whose cycles are disjoint from the cycles in $W_i^2$. Although more supports exist among the clusters in Model 3, its survivability is less than the other models. In Model 3, a cycle can lie on more clusters because of the bidirectional dependencies among all the clusters. This topological characteristic is likely to increase the overlapping of multiple cycles and results in the decline of survivability in this model. These results cast a doubt on a naive statement claiming that the increase of dependencies induces more fragility in general interdependent networks.

\begin{figure}[t]
\centering
\includegraphics[width=0.38\textwidth]{./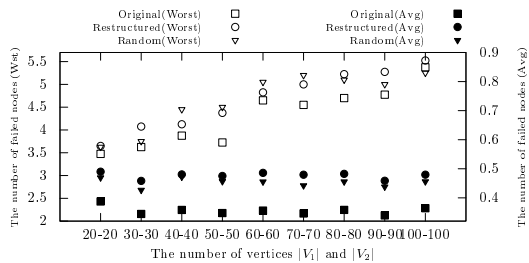}
\caption{The number of failed nodes (Worst case and Average) after a single node failure under $|V_1| = |V_2|$, $\max_{v \in V} \indeg(v) = 4$, $\min_{v \in V} \indeg(v) = 2$, and $l = 1$.\label{impact_result}}
\end{figure}

\section{Discussion: Impact Alleviation vs Survivability}
Although it is not the primary focus of this paper, in this section, we evaluate the behavior of the proposed algorithms in terms of its effect on the size or impact of a cascading failure. Fig. \ref{impact_result} illustrates the influence of our dependency modifications on the size of cascading failures induced by a single node. In this experiment, \textcolor{black}{the impact of a single node failure at a node $v$ is defined as the number of nodes $\theta_v$ that become nonfunctional after a cascading failure initiated by the failure of $v$.} The results are analyzed in terms of the following two metrics:
\begin{itemize}
\item \textit{Worst} (non-filled points): the size of the largest cascading failure: $\max_{v \in V} \theta_v$,
\item \textit{Average} (filled points): the average size of all possible cascading failures: $\frac{\sum_{v \in V} \theta_v}{|V|}$.
\end{itemize}

The robustness of restructured networks against a single node failure always declines in comparison with the original topology. The decline in the size of the largest cascading failure is most remarkable in the case of $|V_1| = |V_2| = 50$ in our simulation. In this case, the size of a cascading failure increases by 1 node after the restructuring. 

In general, the concentrations of provisioning on a certain portion of a network can improve the survivability, though it can make the other portions more fragile. In contrast, appropriate distributions of provisioning are necessary in order to alleviate the impact of any possible single node failure. This difference in robustness against single node failures and system survivability could be a reason for the decline. 

However, when examining the average size of cascading failures, it is observed that the increase in the average number of failed nodes is suppressed within $0.1$ nodes over all network sizes. Thus, it could be said that our method does not deteriorate the robustness against single node failures.

\section{Conclusion}
This paper addresses the design problem of survivable clustered interdependent networks under some constraints relating to the existence of legacy systems during restructuring. Based on the definition of the survivability proposed in a related work, it is claimed that the increase of disjoint cycles could enhance the survivability. The proposed heuristic algorithm tries to compose new disjoint cycles by gradual relocations of certain dependencies (Marginal Arcs) in order to guarantee the functionality of existing systems. Our simulations indicate that the algorithm succeeds in increasing the survivability, especially in networks with fewer dependencies. Moreover, the empirical result implies that the number of dependencies, in general, is not the root cause of the vulnerability to cascading failures. Rather, the appropriate additions of dependencies can improve the overall survivability, while poorly designed dependencies make networks more fragile. When redesigning the interdependency between control and functional entities in SDN, NFV, or CPSs based on the proposed algorithm, the possibility to experience catastrophic cascading failures would decrease.

\bibliographystyle{ieeetr} 
\bibliography{ref}

\begin{IEEEbiography}[{\includegraphics[width=1in,height=1.25in,clip,keepaspectratio]{./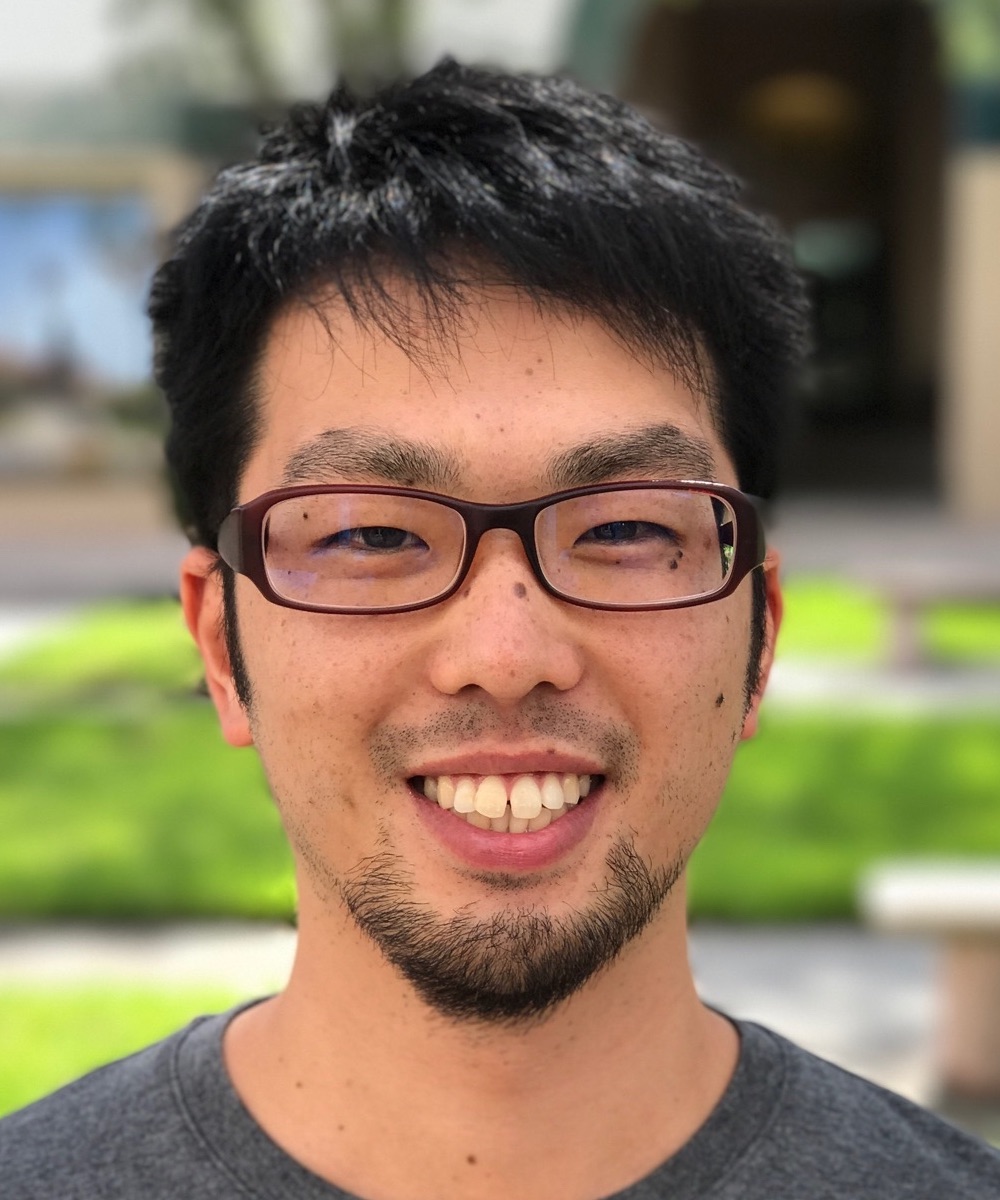}}]{Genya~Ishigaki} (GS'14) received the B.S. and M.S. degrees in engineering from Soka University, Tokyo, Japan, in 2014 and 2016, respectively. He is currently pursuing the Ph.D. degree in computer science at Advanced Networks Research Laboratory, The University of Texas at Dallas, Richardson, TX, USA. His current research interests include design and recovery problems of interdependent networks, and software defined networking.
\end{IEEEbiography}

\begin{IEEEbiography}[{\includegraphics[width=1in,height=1.25in,clip,keepaspectratio]{./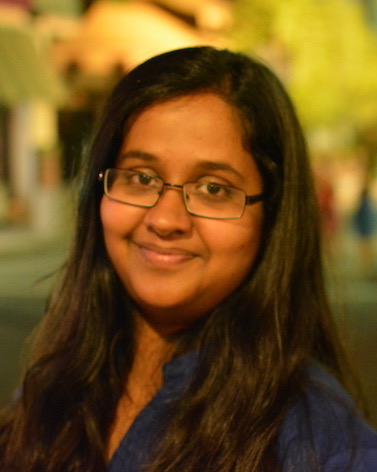}}]{Riti~Gour}
received her BE degree in Electronics and Telecommunication Engineering from S.S.C.E.T, Bhilai, India, in 2012, and her MS degree in Telecommunications Engineering from the University of Texas at Dallas, Texas, in 2015. Since 2015, she has been working towards her Ph.D. degree at UT Dallas, majoring in telecommunications. Her research is focused towards survivability of optical networks against correlated failures and disasters using graph optimization and machine learning techniques. 
\end{IEEEbiography}

\begin{IEEEbiography}[{\includegraphics[width=1in,height=1.25in,clip,keepaspectratio]{./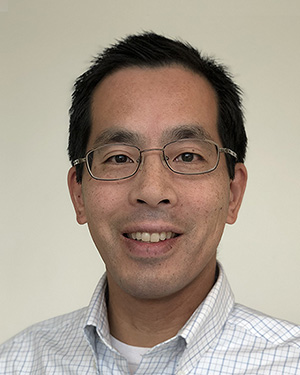}}]{Jason~P.~Jue} (M'99-SM'04) received the B.S. degree in Electrical Engineering and Computer Science from the University of California, Berkeley in 1990, the M.S. degree in Electrical Engineering from the University of California, Los Angeles in 1991, and the Ph.D. degree in Computer Engineering from the University of California, Davis in 1999. He is currently a Professor in the Department of Computer Science at the University of Texas at Dallas. His current research interests include optical networks and network survivability.
\end{IEEEbiography}

\end{document}